\RequirePackage{fix-cm}
\documentclass[11pt]{article}
\usepackage[margin=1in]{geometry}
\usepackage{amsmath,amssymb,amsthm,mathtools}
\usepackage{microtype}
\usepackage{enumitem}
\usepackage[normalem]{ulem}
\usepackage[hidelinks]{hyperref}
\usepackage{caption} 
\usepackage{float}
\usepackage{algorithmic}
\usepackage{graphicx,fancyhdr,subfig,url,array,multirow,makecell,xcolor,booktabs,authblk}
\usepackage[linesnumbered,ruled]{algorithm2e}
\usepackage{tikz}
\usepackage{tikz}
\usepackage{graphicx}
\usepackage{colortbl}
\usetikzlibrary{arrows.meta}
\usetikzlibrary{positioning}

\SetAlgorithmName{Rule}{rule}{List of Rules}
\newtheorem{theorem}{Theorem}

\newtheorem{example}{Example}

\newtheorem{definition}{Definition}

\newtheorem{lemma}[theorem]{Lemma}

\newtheorem{proposition}[theorem]{Proposition}

\newcommand{\elloss}{\ell}

\newcommand{\NM}{\mathrm{NM}}

\newcommand{\SNM}{\mathrm{SNM}}
\newcommand{\MNM}{\mathrm{MNM}}
\newcommand{\pos}[1]{\left(#1\right)_{+}}

\newtheorem{ruledef}{Rule}

\newcommand{\BBWS}{%
  \textnormal{\textsc{BlockBonusedWinStrength}}%
}

\title{\textbf{How Well Can Strategyproof Tournament Rules Resist Pairwise Manipulation?}}
\author[1]{Ke Ding}
\author[1]{Bo Li}
\author[1]{Fangxiao Wang}

\affil[1]{Department of Computing,
The Hong Kong Polytechnic University, Hong Kong, China}

\affil[ ]{\small
\texttt{coco-ke.ding@connect.polyu.hk}\\
\texttt{comp-bo.li@polyu.edu.hk}\\
\texttt{fangxiao.wang@connect.polyu.hk}}
\date{}
\begin{document}
\maketitle

\begin{abstract}
A tournament rule maps the outcomes of all pairwise matches among $n$ teams to a possibly randomized winner. Desirable rules should be Condorcet consistent and monotone, yet also resistant to manipulation among coalition. Prior work mostly measures such manipulation additively through $k$-strongly non-manipulable at $\alpha$ ($k$-SNM-$\alpha$), meaning that no coalition of size $k$ can fix the matches among themselves to increase their total winning probability by $\alpha$. Very recently, two new notions of non-manipulability were introduced. Multiplicative non-manipulability ($k$-MNM-$\delta$) is defined analogously, using the multiplicative factor instead. Non-manipulability for $\lambda$ ($k$-NM$_\lambda$) characterizes the selfishness of a team, which restricts a coalition's gain to be less than $\lambda$ times the winning probability sacrificed by its members. 

In this work, we begin with a strict hierarchy among these three notions: NM$_\lambda$ is stronger than MNM, which is then stronger than SNM. This motivates us to consider those two notions that are stronger but less studied: pairwise multiplicative non-manipulability and $2$-non-manipulability for $\lambda$.
We show that Randomized Death Match is $2$-MNM-$3/2$ and optimally matches the lower bound. Then, we introduce the $\BBWS$ rule, which is Condorcet consistent, monotone, and $2$-NM$_2$. This substantially improves the previous upper bound of $\lambda=11$ and comes within a factor of two of the lower bound $\lambda=1$.
\end{abstract}

\section{Introduction}

We study the classic tournament model, where $n$ teams play against each other exactly once. A tournament of size $n$ is represented by a complete oriented graph on $n$ vertices. Each vertex corresponds to a team, and each directed edge from team $i$ to team $j$ means that team $i$ beats team $j$. After observing the outcomes of all $\binom{n}{2}$ matches, the tournament organizer faces the problem of selecting a winner. A deterministic or randomized rule maps every tournament graph to a probability distribution over the $n$ teams.

Tournament rule design is guided by two central criteria: fairness and non-manipulation. We formalize the latter through non-manipulability properties. Fairness properties ensure that some teams satisfying the relevant fairness criterion receive appropriate probabilities of winning the tournament. For example, \emph{Condorcet consistency} requires an undefeated team, if one exists, to be selected with probability one. On the strategic side, a team should always be incentivized to win rather than rewarded for unilaterally throwing a game; we term this requirement \emph{monotonicity}. It is also natural that non-manipulability should prevent a coalition from manipulating its internal matches to improve its members' joint winning probability.

These constraints are motivated by real-world sports scandals. In the women's badminton doubles at the 2012 Olympic Games, four teams that had already secured qualification to the knockout phase attempted to lose on purpose to avoid stronger opponents in the quarterfinals. Consequently, all four teams were disqualified for deliberately attempting to lose. This incident illustrates how a competition format can create incentives that conflict with monotonicity. At the 1982 FIFA World Cup, West Germany's narrow victory over Austria allowed both teams to advance from the group stage at Algeria's expense. The match became known as the ``Disgrace of Gij\'on,'' and the final matches in each group have since been scheduled simultaneously.

Unfortunately, previous work has shown that even if two teams are ``selfless'' and care only about their joint winning probability, Condorcet consistency cannot be achieved together with pairwise strong non-manipulability, or $2$-SNM \cite{DBLP:conf/dagstuhl/AltmanPT10}. Thus, a line of work studied rules that were approximately SNM. A rule is $k$-SNM-$\alpha$ if no coalition of size at most $k$ can fix the matches among themselves so as to increase their joint winning probability by more than $\alpha$. For $k=2$, several Condorcet consistent and monotone rules attain the optimal bound of $1/3$ \cite{DBLP:conf/innovations/SchneiderSW17,DBLP:conf/innovations/SchvartzmanWZZ20,DBLP:conf/wine/DinevW22,DBLP:conf/aldt/MiksanikSS24}. More recently, Pennock et al.~\cite{DBLP:conf/wine/PennockSW25} initiated the study of the multiplicative version (MNM) by replacing the additive bound $\alpha$ with a multiplicative factor $\delta$, and provided a $2$-MNM-$7/2$ rule.

On the other hand, to characterize the degree of utility transferability, Pennock et al.~\cite{DBLP:conf/aldt/PennockSX24} first introduced a selfishness parameter $\lambda$. One can imagine that $\lambda=0$ represents complete selflessness and $\lambda=\infty$ represents complete selfishness. In this model, one team may be willing to lose on purpose only if the colluding pair's gain in joint winning probability is more than $\lambda$ times the loss in the sacrificing team's winning probability. Although they conjectured that a $2$-NM$_1$ rule exists, all previously known rules require $\lambda\ge\Omega(n)$ to satisfy $2$-NM$_\lambda$. Pennock et al.~\cite{DBLP:conf/wine/PennockSW25} were the first to provide a rule with a constant parameter $\lambda=11$. 

As mentioned, previous work left two substantial gaps that are our major objectives: 
\begin{itemize}
    \item For $2$-MNM-$\delta$, a gap between the upper bound $\delta=7/2$ and lower bound $\delta=3/2$;
    \item For $2$-NM$_\lambda$, a gap between the upper bound $\lambda=11$ and lower bound $\lambda=1$.
\end{itemize}

\subsection{Contribution}
Our contributions are threefold. First, we establish a hierarchy among $k$-SNM-$\alpha$, $k$-MNM-$\delta$, and $k$-NM$_\lambda$. The hierarchy includes the parameter conversions stated below, and both converse implications fail even when $k=2$.

\smallskip

\noindent \textbf{Hierarchical Comparison (Proposition~\ref{the:hierarchy}).} Every $k$-NM$_\lambda$ tournament rule is also $k$-MNM-$(1+\lambda)$ and, consequently, $k$-SNM-$\lambda/(1+\lambda)$.

Although almost all previous work studied the $k$-SNM-$\alpha$ notion, this hierarchy indicates that the two other non-manipulability notions are stronger and more demanding than $k$-SNM-$\alpha$. Next, we tighten the upper bound on multiplicative pairwise non-manipulability ($2$-MNM) and substantially improve the upper bound on pairwise non-manipulability for $\lambda$ ($2$-NM$_\lambda$). We show our improvements in Table \ref{tab:summary-bounds}.

\begin{table}[!tb]
    \centering
    \renewcommand{\arraystretch}{1.3}
    \setlength{\tabcolsep}{7pt}
    \begin{tabular}{lcc}
        \toprule
        \textbf{Guarantee}
        & \textbf{Upper bound}
        & \textbf{Lower bound} \\
        \midrule
        $2$-MNM-$\delta$
        & $\frac72\to \boldsymbol{\frac32}^{*}$ (Thm.~\ref{the:2mnm})
        & $\frac32$ \\
        $2$-NM$_{\lambda}$
        & $11\to \mathbf{2}$ (Thm.~\ref{the:2NM})
        & $1$ \\
        \bottomrule
    \end{tabular}
    \caption{Summary of quantitative bounds. The improved upper bounds are in bold and the asterisk denotes an optimal bound. Smaller values of $\delta$ and $\lambda$ give stronger guarantees.}
    \label{tab:summary-bounds}
\end{table}

As the first main result, we propose the first rule known to satisfy $2$-MNM-$3/2$, which is the optimal bound under this notion. Randomized Death Match is a recursive rule that repeatedly and randomly selects a pair of surviving teams, eliminating the loser of their recorded match; the last surviving team wins.

\smallskip

\noindent \textbf{Main Result 1 (Theorem~\ref{the:2mnm}).} The Randomized Death Match rule is Condorcet consistent, monotone, and $2$-MNM-$3/2$ for every tournament. The factor $3/2$ is optimal among Condorcet consistent tournament rules.

Our second main result substantially improves the current $\lambda=11$ upper bound for $2$-NM$_\lambda$. The analysis of Mik\v{s}an\'{\i}k et al.~\cite{DBLP:conf/aldt/MiksanikSS24} suggests that teams with exactly one loss, which we call \emph{almost-Condorcet winners}, play a central role in pairwise manipulation. Our rule is also inspired by this idea. Let $H(T)$ denote the set of almost-Condorcet winners in tournament $T$. A team that defeats an almost-Condorcet winner is its \emph{blocker}, because reversing that match would create a Condorcet winner. Three possible situations are as follows:
\begin{itemize}
    \item If there is a unique almost-Condorcet winner, the team that defeats it is its blocker.
    \item If there are three almost-Condorcet winners, each defeats one of the other two and loses to the other; each is therefore a blocker.
    \item If there are two almost-Condorcet winners, one defeats the other. The winner of that match is the \emph{internal blocker}, and the team that defeats the internal blocker is the \emph{external blocker}.
\end{itemize}

The $\BBWS$ rule begins with a geometric score determined by each team's win count and adds the scores of the teams it defeats. Intuitively, we make the blockers more resistant to manipulation by assigning them bonus scores, thereby increasing their winning probability. By choosing the bonus scores properly, the proof carefully tracks how blocker roles and total bonuses change.

\noindent \textbf{Main Result 2 (Theorem~\ref{the:2NM}).} The $\BBWS$ rule is Condorcet consistent, monotone, and $2$-NM$_2$ for every tournament. Moreover, $\lambda=2$ is tight for this rule. 

\subsection{Related Work}

The study of strategic manipulation in tournaments was initiated by Altman et al.~\cite{DBLP:conf/dagstuhl/AltmanPT10} and Altman and Kleinberg~\cite{DBLP:conf/aaai/AltmanK10}. Altman et al. proved that no deterministic rule can simultaneously satisfy Condorcet consistency and pairwise strong non-manipulability \cite{DBLP:conf/dagstuhl/AltmanPT10}. Altman and Kleinberg then studied randomized tournament rules and obtained positive results by replacing Condorcet consistency with weaker fairness requirements while retaining monotonicity and $2$-SNM-$\alpha$ \cite{DBLP:conf/aaai/AltmanK10}.

Since it is unimaginable to not award the winner to an undefeated team, subsequent work shifted towards relaxing $2$-SNM-$\alpha$ while maintaining Condorcet consistency and monotonicity. For pairwise collusion, several simple rules---including Randomized Single Elimination Bracket, Randomized King-of-the-Hill, and Randomized Death Match---are Condorcet consistent, monotone, and $2$-SNM-$1/3$; the additive constant $1/3$ is optimal \cite{DBLP:conf/innovations/SchneiderSW17,DBLP:conf/innovations/SchvartzmanWZZ20,DBLP:conf/wine/DinevW22,DBLP:conf/aldt/MiksanikSS24}. Much less is known for coalitions of size $k>2$. Schneider et al.~\cite{DBLP:conf/innovations/SchneiderSW17} proved the lower bound $\alpha\ge (k-1)/(2k-1)$. Mik\v{s}an\'{\i}k et al.~\cite{DBLP:conf/aldt/MiksanikSS24} later constructed a Condorcet consistent, monotone, $3$-SNM-$1/2$ rule and the first explicit family of Condorcet consistent, monotone, $k$-SNM-$\alpha$ rules with $\alpha<1$ for every $k$. Most recently, Pennock et al. introduced the notions of $2$-MNM and $2$-NM$_\lambda$ \cite{DBLP:conf/aldt/PennockSX24,DBLP:conf/wine/PennockSW25}, which are our main focus in this paper.  

Two papers also extend the standard tournament model: Ding and Weinberg~\cite{DBLP:conf/innovations/DingW21} studied probabilistic match outcomes, while Dale et al.~\cite{DBLP:conf/sigecom/DaleFRSW22} generalized winner-take-all tournaments to multiple-prize settings. Beyond coalitional non-manipulability, related work has examined other aspects of tournament design. In particular, several tournament rules have been studied extensively \cite{d994415c-d3ba-3116-95d8-3fd339d786d1, 10.1007/978-3-540-77105-0_30,SANVER2010354, 0499245f-096c-3e3d-a230-e8eb0387217a,repec:pra:mprapa:93006}.
A separate line of work on the \emph{tournament fixing problem} asks how an organizer can choose a single-elimination bracket to favor a desired outcome. We refer the survey of Suksompong~\cite{DBLP:conf/ijcai/Suksompong21}. 

\section{Preliminaries}

Throughout the paper, we consider complete round-robin tournaments. This section fixes the notation for tournaments and tournament rules, states the requirements of Condorcet consistency and monotonicity, and defines the three notions of coalitional non-manipulability compared and analyzed in the following sections.

\begin{definition}[Tournament]
A tournament $T$ on a set of teams $N=[n]=\{1,\ldots,n\}$ is the outcome of the $\binom{n}{2}$ pairwise matches between the teams. Equivalently, $T$ is an orientation of the complete graph $K_n$. We write $i\to_T j$ if team $i$ defeats team $j$ in $T$. The set of all tournaments on $N$ is denoted by $\mathcal{T}_n$. For simplicity, let $d_i(T)=|\{j\in N\setminus\{i\}:i\to_T j\}|$ be the outdegree (or the number of wins) of team $i$.
\end{definition}

\smallskip

\begin{definition}[Tournament rule]
A randomized tournament rule is a map $r:\mathcal{T}_n\longrightarrow\Delta_n$, where $\Delta_n$ denotes the probability simplex over $N$. For each $T\in\mathcal{T}_n$ and $i\in N$, $r_i(T)$ is the probability that the rule selects team $i$ as the winner of $T$.
\end{definition}

Condorcet consistency requires that a team that is undefeated must be selected as the winner.

\begin{definition}[Condorcet consistency]
A team $i$ is a Condorcet winner of $T$ if $d_i(T)=n-1$, meaning that $i$ beats every other team. A tournament rule $r$ is Condorcet consistent if $r_i(T)=1$ whenever $i$ is a Condorcet winner of $T$.
\end{definition}

\smallskip

\begin{definition}[$S$-adjacent tournaments]
Let $S\subseteq N$. Two tournaments $T,T'\in\mathcal{T}_n$ are $S$-adjacent if every match whose two endpoints are not both in $S$ has the same outcome in $T$ and $T'$. In particular, two $\{i,j\}$-adjacent tournaments differ only in the outcome of the match between $i$ and $j$.
\end{definition}

\smallskip

Naturally, monotonicity requires that a team cannot increase its winning probability by intentionally throwing a match that it could have won.

\begin{definition}[Monotonicity]
A tournament rule $r$ is monotone if, for every pair of $\{i,j\}$-adjacent tournaments $T,T'$ such that $i\to_T j$ and $j\to_{T'} i$, one has $r_i(T)\ge r_i(T')$.
\end{definition}

\smallskip

The next two notions formalize coalitional non-manipulability by imposing, respectively, additive and multiplicative bounds on the increase in a coalition's joint winning probability. For the rest of the paper, denote $\sum_{i\in S} r_i(T)=r_S(T)$ for every tournament $T$ and team set $S$. 

\begin{definition}[$k$-SNM-$\alpha$]
Let $1\leq k\leq n$ and $0\le \alpha \le 1$. A tournament rule
$r:\mathcal{T}_n\to\Delta_n$ is \emph{$k$-strongly
non-manipulable up to $\alpha$}, abbreviated $k$-SNM-$\alpha$,
if, for every nonempty coalition $S\subseteq N$ with $|S|\leq k$
and every ordered pair of $S$-adjacent tournaments $T,T'$, $r_S(T')\leq r_S(T)+\alpha$.
\end{definition}

\begin{definition}[$k$-MNM-$\delta$]
Let $k\ge1$ and $\delta\ge1$. A tournament rule $r$ is $\delta$-multiplicative $k$-non-manipulable, abbreviated $k$-MNM-$\delta$, if, for every coalition $S\subseteq N$ with $|S|\le k$ and every ordered pair of $S$-adjacent tournaments $T,T'$, $r_S(T')\le\delta\cdot r_S(T)$.

For $k=2$, this becomes
\[
r_i(T')+r_j(T')\le\delta\bigl(r_i(T)+r_j(T)\bigr).
\]
\end{definition}

Equivalently, no coalition of size at most $k$ can increase its
joint winning probability by more than $\alpha$ under $k$-SNM-$\alpha$ (more than $\delta$ times under $k$-MNM-$\delta$, resp.) by fixing matches among its members. The $k=2$ case is our specific concern in Section \ref{sec:mnm}.

\vspace{2pt}

The following definition models collusion under partially transferable utilities \cite{DBLP:conf/aldt/PennockSX24}. Although the $k=2$ case is our main focus in Section \ref{sec:2NM}, we state the notion for coalitions of size at most $k$ because the general form is used in Section \ref{sec:hierarchy}. Informally, the increase in a coalition's joint winning probability is bounded by $\lambda$ times the sum of its members' sacrifices in winning probability. 

\begin{definition}[\(k\)-\(\mathrm{NM}_{\lambda}\)]
\label{def:k-nm}
Let \(k\ge 2\) and \(\lambda\ge 0\). A tournament rule \(r\) is
\emph{\(k\)-non-manipulable for \(\lambda\)}, abbreviated
\(k\)-\(\mathrm{NM}_{\lambda}\), if, for every coalition
\(S\subseteq N\) with \(|S|\le k\) and every ordered pair of
\(S\)-adjacent tournaments \(T,T'\),
\[
r_S(T')
\le
r_S(T)
+
\lambda
\sum_{i\in S}
\pos{r_i(T)-r_i(T')},
\]
where \(\pos{x}:=\max\{x,0\}\).
\end{definition}

For $k=2$, this definition coincides with the $2$-NM$_\lambda$ definition of Pennock et al.\ \cite{DBLP:conf/aldt/PennockSX24}. Explicitly, for every pair of distinct teams $i,j$ and every ordered pair of $\{i,j\}$-adjacent tournaments $T,T'$,
\[
\begin{aligned}
r_i(T')+r_j(T')\le{}r_i(T)+r_j(T)+\lambda\max\{r_i(T)-r_i(T'),\,r_j(T)-r_j(T')\}.
\end{aligned}
\]
Pennock et al.\ \cite{DBLP:conf/aldt/PennockSX24} also show that, for a monotone rule, the $2$-NM$_\lambda$ condition is equivalent to the following gain--loss formulation. For $\{i,j\}$-adjacent tournaments $T,T'$ with $i\to_T j$ and $j\to_{T'}i$,
\begin{equation}
\label{eq:gain-loss}
r_j(T')-r_j(T)\le(\lambda+1)\bigl(r_i(T)-r_i(T')\bigr).
\end{equation}
In other words, $2$-NM$_\lambda$ implies that when team $i$ throws a match to team $j$, $j$'s increase in winning probability is at most $\lambda+1$ times $i$'s decrease.

\smallskip

The following directed-triangle example in Figure~\ref{fig:triangle} shows that for every Condorcet consistent rule, the best possible guarantees cannot be better than $\alpha\ge 1/3$ for $2$-SNM-$\alpha$, $\delta\ge 3/2$ for $2$-MNM-$\delta$, and $\lambda\ge 1$ for $2$-NM$_\lambda$ \cite{DBLP:conf/innovations/SchneiderSW17, DBLP:conf/wine/PennockSW25, DBLP:conf/aldt/PennockSX24} .

\begin{figure}[t]
\centering
\begingroup
\hyphenpenalty=10000
\exhyphenpenalty=10000

\definecolor{triangleblue}{RGB}{36,87,110}
\definecolor{triangleline}{RGB}{55,65,75}
\definecolor{trianglegold}{RGB}{255,232,163}

\resizebox{.80\linewidth}{!}{%
\begin{tikzpicture}[
  team/.style={
    circle,
    draw=triangleline,
    fill=white,
    line width=.55pt,
    minimum size=6.2mm,
    inner sep=0pt,
    font=\footnotesize
  },
  cw team/.style={
    team,
    fill=trianglegold,
    double,
    double distance=.7pt,
    line width=.45pt
  },
  match/.style={->,>=stealth,line width=.6pt,draw=triangleline},
  changed match/.style={->,>=stealth,line width=1.1pt,draw=triangleblue},
  transition/.style={->,>=stealth,line width=.7pt,draw=triangleline},
  every node/.style={font=\footnotesize}
]

\path[use as bounding box] (0,0) rectangle (12.4,4.4);

\node[font=\footnotesize\bfseries] at (2.8,4.08) {$T$};
\node[font=\footnotesize\bfseries] at (9.6,4.08) {$T'$};

\node[team] (aT) at (2.8,3.25) {$a$};
\node[team] (bT) at (1.5,1.05) {$b$};
\node[team] (cT) at (4.1,1.05) {$c$};
\draw[match] (aT) -- (bT);
\draw[match] (bT) -- (cT);
\draw[match] (cT) -- (aT);

\node[team] (aTp) at (9.6,3.25) {$a$};
\node[team] (bTp) at (8.3,1.05) {$b$};
\node[cw team] (bTp) at (8.3,1.05) {$b$};
\node[team] (cTp) at (10.9,1.05) {$c$};
\draw[changed match] (bTp) -- (aTp);
\draw[match] (bTp) -- (cTp);
\draw[match] (cTp) -- (aTp);

\draw[transition] (5.1,2.15) -- (7.3,2.15);

\end{tikzpicture}%
}
\endgroup
\caption{A tournament which attains the three lower bounds for all tournament rules. Reversing any edge makes a team Condorcet winner.}
\label{fig:triangle}
\end{figure}

\begin{example}[Lower bounds]
\label{prop:triangle-lower-bounds}
Consider the tournament $T$ on three teams $a,b,c$ satisfying $a\to_T b, b\to_T c, c\to_T a$. Then no Condorcet consistent tournament rule is $2$-SNM-$\alpha$ for any $\alpha<1/3$, no Condorcet consistent tournament rule is $2$-MNM-$\delta$ for any $\delta<3/2$, and no Condorcet consistent tournament rule is $2$-NM$_\lambda$ for any $\lambda<1$.
\end{example}

Having defined the three notions, we now establish their hierarchy by determining which notions imply the others in the next section.

\section{A Hierarchy of Coalitional Non-Manipulability Notions}
\label{sec:hierarchy}

The three notions introduced above bound the increase in a coalition's joint winning probability in different ways. The additive condition ($k$-SNM-$\alpha$) bounds this increase by a fixed constant, the multiplicative condition ($k$-MNM-$\delta$) bounds it by a multiplied ratio, and the $k$-$\NM_\lambda$ bounds it relative to the sum of winning probability sacrificed by each member. Accordingly, we uncover the relationship among the three guarantees by determining the implications among them.

For a tournament rule $r$, a coalition $S\subseteq N$, and $S$-adjacent tournaments $T,T'$, denote $L_S(T,T'):=\sum_{i\in S}\pos{r_i(T)-r_i(T')}$ by the sum of each coalition member's positive decrease in winning probability.

\begin{proposition}
\label{the:hierarchy}
For every $k\ge 2$ and every $\lambda\ge 0$,
\[
  k\text{-}\NM_{\lambda}
  \;\Longrightarrow\;
  k\text{-}\MNM\text{-}(1+\lambda)
  \;\Longrightarrow\;
  k\text{-}\SNM\text{-}\frac{\lambda}{1+\lambda}.
\]
Neither implication requires Condorcet consistency or monotonicity.
\end{proposition}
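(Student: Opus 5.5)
The plan is to prove the two implications separately and directly from the definitions, using only two elementary facts: each winning probability is nonnegative, and $r_S(T')\le 1$ because $r(T')\in\Delta_n$. No structural property of the rule (Condorcet consistency or monotonicity) is used, which gives the final sentence of Proposition~\ref{the:hierarchy} for free.

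For the first implication, fix a coalition $S$ with $|S|\le k$ and $S$-adjacent $T,T'$. By Definition~\ref{def:k-nm},
\[
r_S(T')\le r_S(T)+\lambda L_S(T,T').
\]
The key observation is that for each $i\in S$,
\[
\pos{r_i(T)-r_i(T')}\le r_i(T),
\]
since $r_i(T')\ge 0$ and $r_i(T)\ge 0$. Summing over $i\in S$ gives $L_S(T,T')\le r_S(T)$. Hence
\[
r_S(T')\le (1+\lambda)\,r_S(T),
\]
which is exactly $k$-MNM-$(1+\lambda)$.

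For the second implication, I will show more generally that $k$-MNM-$\delta$ implies $k$-SNM-$(1-1/\delta)$ for every $\delta\ge 1$, and then substitute $\delta=1+\lambda$, giving $1-\frac{1}{1+\lambda}=\frac{\lambda}{1+\lambda}$. Given $r_S(T')\le\delta\, r_S(T)$, we have $r_S(T)\ge r_S(T')/\delta$. Therefore
\[
r_S(T')-r_S(T)\le r_S(T')\Bigl(1-\tfrac1\delta\Bigr)\le 1-\tfrac1\delta,
\]
where the last step uses $r_S(T')\le 1$ and $1-1/\delta\ge 0$.

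There is no real obstacle. The only point needing care is to route the additive bound through $r_S(T')$, using the trivial cap of $1$, rather than through $r_S(T)$. Bounding via $r_S(T)$ would give the weaker constant $\delta-1=\lambda$ instead of $\lambda/(1+\lambda)$.
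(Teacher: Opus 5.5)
Your proposal is correct and follows the paper's proof essentially step for step. For the first implication, you bound $\pos{r_i(T)-r_i(T')}\le r_i(T)$ and sum to get $L_S(T,T')\le r_S(T)$. For the second, you use $r_S(T')-r_S(T)\le(1-1/\delta)\,r_S(T')\le(\delta-1)/\delta$ together with the cap $r_S(T')\le 1$, and then substitute $\delta=1+\lambda$.
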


\begin{proof}
Fix a tournament rule $r$, an arbitrary coalition $S\subseteq N$ with $|S|\le k$, and an arbitrary ordered pair of $S$-adjacent tournaments $T,T'$. Suppose first that $r$ is $k$-$\NM_{\lambda}$. Then $r_S(T')\le r_S(T)+\lambda L_S(T,T')$.
To bound $r_S(T')$ by $(1+\lambda)r_S(T)$, note that each member can lose at most its initial winning probability: for every $i\in S$, non-negativity gives $\pos{r_i(T)-r_i(T')}\le r_i(T)$. Summing over $i\in S$ yields $L_S(T,T')\le r_S(T)$.
Hence, we can obtain $r_S(T')\le r_S(T)+\lambda r_S(T)=(1+\lambda)r_S(T)$. Since $S,T,T'$ were arbitrary, $r$ is $k$-$\MNM$-$(1+\lambda)$.

For the second implication, we convert a multiplicative bound into an additive bound using the fact that the coalition's winning probability is at most one. It is noted that any rule tight on $k$-MNM-$\delta$ must also be tight on $k$-SNM-$\alpha$.
Suppose that $r$ is $k$-$\MNM$-$\delta$ with $\delta\ge 1$. Then $r_S(T)\ge r_S(T')/\delta$. Since $r_S(T')\le 1$ and $\delta\ge 1$, it follows that
\[
  r_S(T')-r_S(T)
  \le \left(1-\frac{1}{\delta}\right)r_S(T')
  \le\frac{\delta-1}{\delta}.
\]
Since $S,T,T'$ were arbitrary, the last inequality shows that $r$ is $k$-$\SNM$-$(\delta-1)/\delta$.
Taking $\delta=1+\lambda$ gives the second implication.
\end{proof}

However, the converse implications in Proposition~\ref{the:hierarchy} fail in general, even for monotone rules and coalitions of size two, as the following examples show. Example~\ref{ex:snm-not-mnm} shows that an additive guarantee can allow a coalition to increase its winning probability from zero, whereas a multiplicative guarantee cannot.

\begin{example}[Observation 4.3 in \cite{DBLP:conf/wine/PennockSW25}]
\label{ex:snm-not-mnm}
The Randomized Single Elimination Bracket rule is Condorcet consistent, monotone, and $2$-SNM-$1/3$, but it is not $2$-MNM-$\delta$ for any $\delta$.
\end{example}

Moreover, example~\ref{ex:mnm-not-nm} shows that the reverse implication from multiplicative non-manipulability to $2$-$\NM_\lambda$ fails for any $\lambda=o(n)$.

\begin{example}[Theorem 4.4 and Observation 5.2 in \cite{DBLP:conf/wine/PennockSW25}]
\label{ex:mnm-not-nm}
The Normalized Geometric Win Count Score rule is Condorcet consistent, monotone, and $2$-MNM-$7/2$, but it is not $2$-NM$_\lambda$ for any $\lambda=o(n)$.
\end{example}

This hierarchy motivates the next two rule-design objectives: attaining the optimal pairwise multiplicative guarantee and reducing the selfishness parameter required for pairwise non-manipulability. Sections~\ref{sec:mnm} and~\ref{sec:2NM} address these objectives, respectively.

\section{A Tight Multiplicative Non-Manipulability Bound}\label{sec:mnm}

We revisit Randomized Death Match (RDM), which is known to be Condorcet consistent, monotone, and $2$-SNM-$1/3$~\cite{DBLP:conf/wine/DinevW22}.
We show that it also attains the optimal guarantee for pairwise multiplicative non-manipulability. 

\begin{samepage}
\begin{ruledef}[Randomized Death Match]
\label{rule:rdm}
{Given a tournament $T$ on a team set $N$, proceed as follows.}
\begin{enumerate}[label=\arabic*.]
\item Initialize the set of surviving teams as $N_0=N$.
\item At each step $t$ with $|N_t|\ge 2$, select a pair $\{i,j\}\subseteq N_t$ uniformly at random. If $i\to_T j$, eliminate $j$ by setting $N_{t+1}=N_t\setminus\{j\}$; otherwise, eliminate $i$ by setting $N_{t+1}=N_t\setminus\{i\}$.
\item When a single team $w$ remains, select $w$ as the winner. Equivalently, $r_i(T)$ is the probability that team $i$ is the last surviving team.
\end{enumerate}
\end{ruledef}
\end{samepage}

\begin{theorem}
\label{the:2mnm}
For every $n\ge 2$, RDM is $2$-MNM-$3/2$. In particular, for every pair $S=\{i,j\}$ and every ordered pair of $S$-adjacent tournaments $T,T'$ on the same vertex set,
\[
    r_i(T')+r_j(T')\le \frac32\bigl(r_i(T)+r_j(T)\bigr).
\]
Moreover, the factor $3/2$ is optimal among all Condorcet consistent tournament rules.
\end{theorem}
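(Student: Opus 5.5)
The plan is to prove the inequality by induction on the set of surviving teams, using the fact that RDM is a Markov process on subsets of $N$. The optimality claim is immediate from Example~\ref{prop:triangle-lower-bounds}, so all the work is in the upper bound.

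\textbf{Setup.} Fix $S=\{i,j\}$ and $\{i,j\}$-adjacent $T,T'$. Without loss of generality $i\to_T j$ and $j\to_{T'} i$ (if the $i$--$j$ match has the same outcome in both, then $T=T'$ and there is nothing to prove). For a set $A\subseteq N$ of surviving teams, let $q_T(A)$ be the probability that the winner lies in $S$ when RDM is run on $T$ restricted to $A$, and define $q_{T'}(A)$ likewise. We have $r_S(T)=q_T(N)$.

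If $A$ contains at most one of $i,j$, the edge $i$--$j$ is never used, so $q_T(A)=q_{T'}(A)$. The claim to prove, by induction on $|A|$, is
\[
q_{T'}(A)\le \tfrac32\, q_T(A)\quad\text{for every }A\supseteq\{i,j\}.
\]
The base case $|A|=2$ is trivial, since both sides equal $1$.

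\textbf{Recursion.} With $M=\binom{|A|}{2}$, write
\[
q(A)=\frac1M\sum_{\{a,b\}\subseteq A} q(A\setminus\{\text{loser}\}).
\]
Split the pairs into three groups.
\begin{itemize}[label={}]
\item Pairs avoiding both $i$ and $j$: the induction hypothesis gives a factor of at most $3/2$ termwise.
\item Pairs $\{i,k\}$ or $\{j,k\}$ with $k\notin S$: if $i$ or $j$ is eliminated, the resulting value is identical under $T$ and $T'$. If $k$ is eliminated, we use induction again.
\item The single pair $\{i,j\}$: it contributes $x:=q(A\setminus\{j\})$ in $T$ and $y:=q(A\setminus\{i\})$ in $T'$.
\end{itemize}
It therefore suffices to show that the excess $y-x$ is absorbed by the slack $\tfrac12$ times the terms that are common to both tournaments.

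\textbf{Absorbing the excess.} Suppose $i$ is undefeated within $A\setminus\{j\}$ in $T$. Then $i$ is a Condorcet winner of $A$ in $T$, so $q_T(A)=1$ and the claim is trivial. Otherwise some $k\in A$ beats $i$, and the pair $\{i,k\}$ contributes exactly $y$ to both $q_T(A)$ and $q_{T'}(A)$. Symmetrically, if $j$ has a conqueror in $A\setminus\{i\}$, the corresponding pair contributes $x$ to both. Hence the target inequality $y-x\le\tfrac12(\text{common terms})$ would follow from
\[
y-x\le \tfrac12\,y+\tfrac12(\text{the rest}).
\]
The three-team cycle shows this is exactly tight: there $x=0$ and $y=1$, and the remaining common term equals $1$.

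\textbf{Main obstacle.} The hard step is the case where $y\gg x$ and $j$ has few conquerors, so that the single $y$-term from $\{i,k\}$ is not enough. I would first try strengthening the induction hypothesis to a two-parameter statement that keeps the slack explicitly. Concretely, I would aim to prove
\[
q_{T'}(A)-q_T(A)\le \tfrac12\Bigl(q_T(A)-\Pr\nolimits_T[\text{$i$ and $j$ never meet from }A]\cdot(\cdot)\Bigr),
\]
or, alternatively, to bound the gain by $\tfrac12$ times the probability under $T$ of the paths on which $i$ is eliminated by a third team while $j$ survives.

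As a fallback, I would use a coupling argument in the style of the Dinev--Weinberg proof of $2$-SNM-$1/3$. Run both processes with the same random pair sequence. They diverge only on the event $E$ that $i$ and $j$ meet while both are alive. Each such sample path would be matched with a path in $T$ on which the coalition wins. The key observation is that, conditioned on the unordered set of matches involving $i$ or $j$, the first such match is $\{i,j\}$ no more often than it is $\{i,k\}$ for a conqueror $k$ of $i$. This should give $\Pr[E\wedge S\text{ wins in }T']\le\tfrac12\,r_S(T)$ plus terms that are common to both tournaments.
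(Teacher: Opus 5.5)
You have the right setup: your pair-by-pair recursion is the paper's recurrence $\binom{n}{2}r_S(T)=\sum_v \ell_v(T)\,r_S(T-v)$, and your $y$ is the paper's $r_j(T-i)$. The optimality part via the triangle example is also correct. But the argument fails at the decisive step.

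You credit every induction term, where a third team is eliminated, with zero slack, because the hypothesis only gives $\frac32 q_T-q_{T'}\ge 0$ there. You then try to pay for the $\{i,j\}$ term out of the common terms alone. This accounting already fails for $|A|=3$. On your own triangle $k\to i$, $i\to j$, $j\to k$, the only common term is the pair $\{i,k\}$, worth $y=1$, and $x=0$. Your sufficient condition $y-x\le\frac12 y+\frac12(\text{rest})$ then reads $1\le\frac12$.

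The ``remaining common term equal to $1$'' you invoke is the pair $\{j,k\}$. It eliminates $k$, so it is an induction term in your own classification. It equals $1$ in $T$ not because it is common, but because $i$ is undefeated once $k$ is gone.

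Neither fallback fills the hole. The strengthened hypothesis is left unspecified. The coupling's key observation also runs the wrong way without conditioning: $\{i,j\}$ and $\{i,k\}$ are equally likely only while $k$ is alive. Since $k$ may be eliminated first, $\{i,j\}$ is the first match touching $S$ at least as often as $\{i,k\}$, not at most as often. The paths on which $k$ dies first are exactly where the missing slack lives. Note also that the quantity that matters is the number of conquerors of $i$, not of $j$.

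The missing idea is a case split on the number of conquerors of $i$ in $A$ under $T$ (all of them lie outside $S$), plus extra slack extracted from one specific induction term.

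\begin{itemize}
\item If $i$ has at least two conquerors, there are at least two common pairs each worth $y$. They give slack $\frac12\cdot 2y=y\ge y-\frac32 x$, and you are done.
\item If $i$ has exactly one conqueror $k$, either $k$ is a Condorcet winner of $A$ in both tournaments (both sides vanish), or some $m\in A$ beats $k$.
\item In the latter case, the pair $\{k,m\}$ eliminates $k$ and leaves $i$ undefeated in $T$. So $q_T(A\setminus\{k\})=1$, and this term has slack $\frac32-q_{T'}(A\setminus\{k\})\ge\frac12\ge\frac12 y$, not merely $\ge 0$.
\end{itemize}

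Together with the $\frac12 y$ from the common pair $\{i,k\}$, this absorbs $y$, and every other term has nonnegative slack. This is precisely Case~3 of the paper's induction.
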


To support the induction, we first express a coalition’s joint winning probability in an \(n\)-team tournament in terms of the corresponding probabilities after one team is eliminated. Denote the induced sub-tournament on $N\setminus\{v\}$ by $T-v$ for $v\in N$. Because only surviving coalition members can contribute to $r_S(T)$, we extend the notation $r_S(T)=\sum_{u\in S\cap N}r_u(T)$. Also let $\elloss_v(T)=\bigl|\{u\in N:u\to_T v\}\bigr|$ denote the number of losses of $v$ in $T$. We first show how the joint winning probability of tournament of size $n$ relates to those of size $n-1$.

\begin{lemma}
\label{lem:rdm-recurrence}
For every tournament $T$ with team set $N$ of size $n\ge 2$ and every set $S$ of teams,
\begin{equation}
\label{eq:rdm-recurrence}
    \binom{n}{2}r_S(T)=\sum_{v\in N}\elloss_v(T)\,r_S(T-v).
\end{equation}
\end{lemma}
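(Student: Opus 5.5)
The plan is to condition on the first step of RDM. In $T$ with $|N|=n\ge 2$, the first step selects each unordered pair $\{u,w\}\subseteq N$ with probability $1/\binom{n}{2}$. The loser of that recorded match is eliminated, and the process then continues on $T-v$, where $v$ is the eliminated team. The key observation is that RDM restarted on the survivors is again RDM on the induced sub-tournament. Every later step picks a uniformly random pair among the surviving teams, and the outcomes among survivors are exactly those of $T-v$. By the strong Markov/memoryless structure of the process, the conditional probability that a given $u\in S$ wins, given that $v$ was eliminated first, is therefore $r_u(T-v)$. Here we use the convention $r_u(T-v)=0$ if $u=v$, consistent with the extended notation $r_S(T)=\sum_{u\in S\cap N}r_u(T)$.

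Next, I will compute the probability that a fixed team $v$ is eliminated at the first step. Team $v$ is eliminated exactly when the chosen pair is $\{u,v\}$ with $u\to_T v$. There are $\elloss_v(T)$ such pairs, so this probability is $\elloss_v(T)/\binom{n}{2}$. As a sanity check, every pair has exactly one loser, so $\sum_v \elloss_v(T)=\binom{n}{2}$ and these probabilities sum to one.

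By the law of total probability,
\[
r_S(T)=\sum_{v\in N}\frac{\elloss_v(T)}{\binom{n}{2}}\,r_S(T-v).
\]
Multiplying through by $\binom{n}{2}$ gives \eqref{eq:rdm-recurrence}.

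The only step needing care is justifying the restart claim. I would make it precise by defining RDM for any team set and noting that the process from step $1$ onward, given $N_1=N\setminus\{v\}$, has exactly the same law as RDM run on $T-v$. The base case $n=2$ is immediate: $\binom{2}{2}=1$, the loser $v$ has $\elloss_v=1$ and leaves $T-v$ as a single team that wins, while the winner has $\elloss=0$. This matches the direct computation.
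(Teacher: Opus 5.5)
Your proposal is correct and follows essentially the same argument as the paper: condition on the first eliminated team $v$, which occurs with probability $\elloss_v(T)/\binom{n}{2}$, note that RDM then continues on $T-v$, and apply the law of total probability. Your added remarks (the restart property, and the convention that $r_u(T-v)=0$ when $u=v$) only make explicit what the paper leaves implicit.
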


\begin{proof}
A team $v\in N$ is eliminated first precisely when the selected pair consists of $v$ and one of the $\elloss_v(T)$ teams that defeat it. Since each pair of teams is chosen uniformly at random, this event has probability $\elloss_v(T)/\binom{n}{2}$. After $v$ is eliminated, RDM continues on $T-v$, where the coalition wins with probability $r_S(T-v)$. Summing over $v\in N$ gives \eqref{eq:rdm-recurrence}.
\end{proof}

\begin{lemma}[]
\label{lem:rdm-2mnm}
For every $n\ge 2$, every pair $S=\{i,j\}$ and $S$-adjacent tournaments $T,T'$ on $n$ vertices, $r_S(T')\le \frac32r_S(T).$
\end{lemma}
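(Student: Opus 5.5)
The plan is to prove Lemma~\ref{lem:rdm-2mnm} by induction on $n$, using the recurrence of Lemma~\ref{lem:rdm-recurrence}. For $n=2$ we have $S=N$, so $r_S(T)=r_S(T')=1$ and the claim is trivial. Since the statement ranges over ordered pairs $T,T'$, by relabeling we may assume $i\to_T j$ and $j\to_{T'} i$.

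\textbf{Setting up the comparison.} Apply \eqref{eq:rdm-recurrence} to both $T$ and $T'$ and compare them term by term.
\begin{itemize}
\item For $v\notin S$, the losses agree, $\elloss_v(T)=\elloss_v(T')$, and $T-v$, $T'-v$ are $S$-adjacent tournaments on $n-1$ vertices. The induction hypothesis gives $r_S(T'-v)\le\frac32 r_S(T-v)$.
\item For $v\in S$, we have $T-i=T'-i$ and $T-j=T'-j$. So, writing $a=r_j(T-i)$ and $b=r_i(T-j)$, the $v=i$ term contributes $\elloss_i\,a$ and the $v=j$ term contributes $\elloss_j\,b$.
\item The loss counts change as $\elloss_i(T')=\elloss_i(T)+1$ and $\elloss_j(T')=\elloss_j(T)-1$. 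Here $\elloss_j(T)\ge1$ because $i$ beats $j$ in $T$.
\end{itemize}
Abbreviating $\elloss_i=\elloss_i(T)$ and $\elloss_j=\elloss_j(T)$, it suffices to show
\[
\sum_{v\notin S}\elloss_v\Bigl(\tfrac32 r_S(T-v)-r_S(T'-v)\Bigr)+\tfrac32(\elloss_i a+\elloss_j b)-(\elloss_i+1)a-(\elloss_j-1)b\;\ge\;0 .
\]
The first sum is nonnegative. The remaining part equals $(\tfrac{\elloss_i}{2}-1)a+(\tfrac{\elloss_j}{2}+1)b$.

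\textbf{Case analysis on $\elloss_i$.}
\begin{itemize}
\item If $\elloss_i\ge2$, every coefficient is nonnegative and we are done.
\item If $\elloss_i=0$, then $i$ is a Condorcet winner of $T$. So $r_S(T)=1$, and $r_S(T')\le1\le\frac32$ trivially.
\item If $\elloss_i=1$, this is the case I expect to be the main obstacle. The $S$-terms alone leave a deficit of $a/2$, which the $b$ term need not cover, so I will use slack from a non-coalition term.
\end{itemize}

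\textbf{The hard case $\elloss_i=1$.} Here $i$ loses in $T$ to exactly one team $k\ne j$.
\begin{itemize}
\item In $T-k$, team $i$ is a Condorcet winner. By Condorcet consistency of RDM, $r_S(T-k)=1$.
\item On the other hand, $r_S(T'-k)\le1$. So the $v=k$ term has slack at least $\elloss_k(T)\bigl(\tfrac32-1\bigr)=\elloss_k/2$.
\item If $\elloss_k\ge1$, this slack is at least $\tfrac12\ge a/2$, since $a\le1$. This covers the deficit.
\item If $\elloss_k=0$, then $k$ is a Condorcet winner in both $T$ and $T'$ (the match $\{i,j\}$ does not involve $k$). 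Hence $r_S(T)=r_S(T')=0$ and the inequality holds trivially.
\end{itemize}
This completes the induction.

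For the optimality part of Theorem~\ref{the:2mnm}, I would simply cite Example~\ref{prop:triangle-lower-bounds}.
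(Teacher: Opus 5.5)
Your proposal is correct and follows the paper's proof essentially step for step. Like the paper, you induct on $n$ via Lemma~\ref{lem:rdm-recurrence}, reduce to the coefficients $(\elloss_i/2-1)$ on $A$ and $(\elloss_j/2+1)$ on $B$, split on $\elloss_i(T)\in\{0,1,\ge 2\}$, and close the case $\elloss_i(T)=1$ with the slack $\elloss_x(T)/2\ge 1/2$ from deleting the unique team that beats $i$, or note directly that this team is a Condorcet winner in both tournaments.
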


\begin{proof}
Apply induction on $n$. For $n=2$, the pair $S$ contains every team, so $r_S(T)=r_S(T')=1$.

Fix $n\ge 3$ and assume that the bound holds on fewer than $n$ teams. Fix a pair $S=\{i,j\}$ and $S$-adjacent tournaments $T,T'$ on a common team set $N$ of size $n$. Assuming without loss of generality that $i\to_T j$ and $j\to_{T'} i$. Our goal then is to show $\frac32r_S(T)-r_S(T')\ge 0$ by comparing the contributions of each possible first elimination. Reversing the mutual match adds one loss to $i$ and removes one loss from $j$, so $\elloss_i(T')=\elloss_i(T)+1, \elloss_j(T')=\elloss_j(T)-1$.
Deleting either member of $S$ removes the only modified edge. The coalition's winning probabilities after deleting $i$ and $j$, respectively, are therefore the same in the two tournaments; denote them by
\[
    A=r_j(T-i)=r_j(T'-i),
    \qquad
    B=r_i(T-j)=r_i(T'-j).
\]
Every team outside $S$ has the same number of losses in $T$ and $T'$. Separating deletions within $S$ from deletions outside $S$, Lemma~\ref{lem:rdm-recurrence} gives
\begin{equation}
\label{eq:recurrence-T}
    \binom{n}{2}r_S(T)
    =\elloss_i(T)A+\elloss_j(T)B+\sum_{v\notin S}\elloss_v(T)r_S(T-v)
\end{equation}
and
\begin{equation}
\label{eq:recurrence-Tprime}
    \binom{n}{2}r_S(T')
    =(\elloss_i(T)+1)A+(\elloss_j(T)-1)B+\sum_{v\notin S}\elloss_v(T)r_S(T'-v).
\end{equation}
Subtracting \eqref{eq:recurrence-Tprime} from $3/2$ times \eqref{eq:recurrence-T} yields
\begin{equation}
\label{eq:master-rdm}
\begin{aligned}
    \binom{n}{2}\left(\frac32r_S(T)-r_S(T')\right)
    &=\left(\frac{\elloss_i(T)}{2}-1\right)A
      +\left(\frac{\elloss_j(T)}{2}+1\right)B\\
    &\quad+\sum_{v\notin S}\elloss_v(T)
      \left(\frac32r_S(T-v)-r_S(T'-v)\right).
\end{aligned}
\end{equation}
For every $v\in N\setminus S$, the tournaments $T-v$ and $T'-v$ still contain both members of $S$ and differ only in their mutual match. They are therefore $S$-adjacent on $n-1$ vertices, and for every $v\notin S$, the induction hypothesis gives
\begin{equation*}
    r_S(T'-v)\le \frac32r_S(T-v).
\end{equation*}
Thus every summand in the final sum in \eqref{eq:master-rdm} is nonnegative. The $B$-term is also nonnegative: $B\ge 0$, and $\elloss_j(T)\ge 1$ because $j$ loses to $i$ in $T$. Only the $A$-term remains to be discussed. We distinguish three cases according to $\elloss_i(T)$.

\medskip
\noindent\emph{Case 1: $\elloss_i(T)=0$.}
Team $i$ is a Condorcet winner of $T$, so Condorcet consistency gives $r_i(T)=r_S(T)=1$. Hence $r_S(T')\le 1\le \frac32r_S(T)$.

\medskip
\noindent\emph{Case 2: $\elloss_i(T)\ge 2$.}
Here $\elloss_i(T)/2-1\ge 0$, so every term on the right-hand side of \eqref{eq:master-rdm} is nonnegative. This gives $r_S(T')\le \frac32r_S(T)$.

\medskip
\noindent\emph{Case 3: $\elloss_i(T)=1$.}
Let $x$ be the unique team that defeats $i$ in $T$. Since $i\to_T j$, we have $x\notin S$. If $x$ is a Condorcet winner of $T$, it remains one in $T'$ because the modified match is not incident to $x$. Condorcet consistency then gives $r_S(T)=r_S(T')=0$, proving the bound. Otherwise, $\elloss_x(T)\ge 1$.

Since $\elloss_i(T)=1$, the $A$-term in \eqref{eq:master-rdm} is $-A/2\ge -1/2$, because $0\le A\le 1$. It therefore suffices to show that the term corresponding to the deletion of $x$ contributes at least $1/2$. Deleting $x$ removes the unique loss of $i$, so Condorcet consistency gives $r_S(T-x)=1$. Together with $r_S(T'-x)\le 1$ and $\elloss_x(T)\ge 1$, this yields
\begin{equation}
\label{eq:blocker-slack}
    \elloss_x(T)\left(\frac32r_S(T-x)-r_S(T'-x)\right)
    \ge \elloss_x(T)\left(\frac32-1\right)
    \ge \frac12.
\end{equation}
By \eqref{eq:blocker-slack}, the $A$-term and the $x$-summand have nonnegative sum. All remaining terms in \eqref{eq:master-rdm} are nonnegative, so $r_S(T')\le \frac32r_S(T)$.

Cases 1--3 exhaust the possible values of the nonnegative integer $\elloss_i(T)$. This completes the induction and proves the bound for every $n\ge 2$.
\end{proof}

\begin{proof}[Proof of Theorem~\ref{the:2mnm}]
Lemma~\ref{lem:rdm-2mnm} proves the multiplicative bound for every pair, and coalitions with fewer than two members cannot change any match. Hence RDM is $2$-MNM-$3/2$. Combining this bound with the known Condorcet consistency and monotonicity of RDM gives the stated guarantees. Optimality of the factor $3/2$ is shown in Example \ref{prop:triangle-lower-bounds}.
\end{proof}

\section{Non-Manipulability for \texorpdfstring{$\lambda=2$}{lambda=2}}\label{sec:2NM}

We next turn to the upper bound for $2$-NM$_\lambda$. At a high level, the rule first assigns each team a score based on its win count and the win counts of the teams it defeats. We then augment these scores with bonuses in tournaments containing \emph{almost-Condorcet winners}, that is, teams with exactly one loss. It is easy to see there are at most three such teams. Bonuses are assigned to teams that defeat almost-Condorcet winners, which we term \emph{blockers}, and the bonus values are depend on the number of almost-Condorcet winners.

\begin{ruledef}[$\BBWS$]
\label{rule:2NM}
Given a tournament $T$ on a team set $N$ of size $n$, fix the parameters
\[
 B=\frac{89}{36},
 P=\frac{13}{6},
 Q=\frac{17}{12},
 R=\frac76,
 M=\frac{143}{12},
\]
and proceed as follows.
\begin{enumerate}[label=\arabic*.]
\item If $T$ has a Condorcet winner $w$, set $r_w(T)=1$ and
 $r_i(T)=0$ for every $i\ne w$, and stop.
\item Otherwise, for each $i\in N$, define the score $s_i(T)=3^{d_i(T)-(n-2)}$
and the unbonused win-strength score $t_i^0(T)=\frac32s_i(T)+\sum_{v:i\to_T v}s_v(T)$.

\item Let $H(T)=\{h\in N:d_h(T)=n-2\}$ be the set of almost-Condorcet winners. Define the bonuses $b_i(T)$ according to the size of $H(T)$ as follows:
\begin{enumerate}[label=(\alph*)]
\item If $H(T)=\varnothing$, every bonus is zero.
\item If $H(T)=\{h\}$, the unique blocker of $h$ receives bonus $B$;
all other bonuses are zero.
\item If $H(T)=\{a,b\}$ and $a\to_T b$, the internal blocker $a$
 receives bonus $P$, and the external blocker of $a$ receives bonus $Q$;
all other bonuses are zero.
\item If $|H(T)|=3$, every member of $H(T)$ receives bonus $R$;
all other bonuses are zero.
\end{enumerate}
\item For each team $i\in N$, denote by $t_i(T)=t_i^0(T)+b_i(T)$ its bonused score, and define
its winning probability by
\[
 r_i(T)
 =
 \frac{t_i(T)}{M}
 +\frac1n\left(1-\frac{\sum_{v\in N}t_v(T)}{M}\right).
\]
\end{enumerate}
\end{ruledef}

\begin{theorem}\label{the:2NM}
    The $\BBWS$ rule is Condorcet consistent, monotone, and $2$-NM$_2$ for every tournament. Moreover, the $2$-NM$_2$ guarantee is tight for this rule.
\end{theorem}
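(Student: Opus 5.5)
We will verify the three properties separately, working throughout with the gain--loss form \eqref{eq:gain-loss} with $\lambda=2$. For a monotone rule, $2$-NM$_2$ is equivalent to the following: whenever $T,T'$ are $\{i,j\}$-adjacent with $i\to_T j$ and $j\to_{T'}i$,
\[
r_j(T')-r_j(T)\le 3\bigl(r_i(T)-r_i(T')\bigr).
\]

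\textbf{Well-definedness and Condorcet consistency.} Step 1 handles Condorcet consistency directly. For the rule to be well defined, we still need $r_i(T)\ge 0$ and $\sum_v t_v(T)\le M$ whenever no Condorcet winner exists. In that case every $d_i\le n-2$, so every $s_i\le 1$. The scores are geometric, so $\sum_i s_i$ is dominated by teams of high win count: at most three teams have $s=1$, and the remaining scores are small. We bound $\sum_v t^0_v=\frac32\sum s_v+\sum_v d^{\mathrm{in}}\!\cdot s_v=\sum_v s_v(\tfrac32+\ell_v)$. Using $s_v=3^{1-\ell_v}$ (since $d_v=n-1-\ell_v$), this becomes $\sum_v 3^{1-\ell_v}(\tfrac32+\ell_v)$, and we bound it case by case on $|H(T)|$. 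In each case we add the bonuses and check that the total is at most $M=143/12$. The constant $M$ should be exactly the worst case, and we expect it to be attained by the $|H|=3$ configuration, possibly together with the $|H|=1$ and $|H|=2$ cases. The key structural fact is that the number of teams with exactly $\ell$ losses is at most $2\ell+1$.

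\textbf{Writing $r$ as a linear function of $t$.} Write $r_i=\frac{t_i}{M}+\frac1n(1-\frac{\Sigma t}{M})$. Fix the match $i\to j$ that is reversed and consider the resulting changes $\Delta t_i,\Delta t_j,\Delta\Sigma$. Since $d_i$ drops by one and $d_j$ rises by one, $s_i$ is divided by $3$ and $s_j$ is multiplied by $3$. This changes $t^0$ for $i$, for $j$, and for every team that beats $i$ or $j$. We then compute $\Delta r_j$ and $-\Delta r_i$ explicitly. In the unbonused part, $t_i^0$ loses $s_j+\frac32\cdot\frac23 s_i=s_j+s_i$. Meanwhile $t_j^0$ gains $\frac32\cdot 2s_j+s_i/3$. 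The $\frac1n$ correction terms are small and have matching signs.

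\textbf{Monotonicity and the bound $3$.} With the unbonused part in hand, monotonicity amounts to $\Delta r_i\le 0$. The ratio bound amounts to $3s_j+\tfrac{s_i}{3}\lesssim 3(s_i+s_j)$ plus correction terms, which leaves slack except when $i$ or $j$ has few losses. The delicate cases are exactly those in which $H(T)$ or $H(T')$ changes. This happens when $i$ or $j$ has $0$, $1$ or $2$ losses, or when $i$ or $j$ is a blocker. We enumerate these transitions: $|H|$ going between $0,1,2,3$, the Condorcet winner appearing in $T'$ (where $j$ jumps to probability $1$), and blocker roles moving between teams. For each transition we verify the inequality using the specific values of $B,P,Q,R$. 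These values were evidently tuned so that several cases are tight, and this case analysis is the main obstacle.

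The hardest subcase should be $T'$ having Condorcet winner $j$, so that $j$ was almost-Condorcet in $T$ with $i$ as its blocker. Here the left side is $1-r_j(T)$, and the bonus $B$, $P$ or $R$ that $i$ receives is what makes $r_i(T)$ large enough. We will handle it first; it forces a lower bound on the bonuses. The opposite subcases, where $i$ loses its blocker status while $j$ gains little, force upper bounds on the bonuses.

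\textbf{Tightness.} Finally, we exhibit a small tournament attaining equality. A natural candidate is the subcase above with $n=3$ or $4$ and $|H|=3$, where the computation should give exactly $r_j(T')-r_j(T)=3(r_i(T)-r_i(T'))$.
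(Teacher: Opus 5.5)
Your overall architecture matches the paper's: the gain--loss form, the exact unbonused change, an enumeration of bonus transitions, and a separate Condorcet-creation case. The tightness step, however, is false as stated. Suppose $|H(T)|=3$ and reversing $i\to j$ makes $j$ a Condorcet winner. Then $i$ and $j$ each have $t\ge\frac32+1+R=\frac{11}{3}$. Since the uniform share is nonnegative, $\frac{1-r_j(T)}{r_i(T)}\le\frac{M-t_j(T)}{t_i(T)}\le\frac{M-11/3}{11/3}=\frac94$. For $n=3$ the cyclic triangle gives each team $\frac13$, so the ratio is only $2$. The configurations that approach $3$ are different. One is $H(T)=\{j\}$ with $i$ its unique blocker, where the bound is $\frac{M-3/2}{1+B}=3$. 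The other is $H(T)=\{j,h\}$ with $i$ the external blocker holding $Q$ (a bonus your list omits), where it is $\frac{M-14/3}{1+Q}=3$. Even there, $t_i(T)$ strictly exceeds $1+B$, resp.\ $1+Q$, because $\frac32 s_i(T)>0$. All no-Condorcet transitions also have strict slack, so no finite tournament attains equality. Tightness has to be proved along a sequence, as the paper does. Put $j$, losing only to $i$, on top of a regular tournament on $m$ teams. All other scores then vanish as $m\to\infty$, and the ratio tends to $\frac{M-3/2}{1+B}=3$.

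The other two parts are only announced, and the tools you name would not close them.

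\textbf{Well-definedness.} The level count (at most $2\ell+1$ teams with exactly $\ell$ losses) is too weak. Already for $H(T)=\varnothing$, where every $\ell_v\ge2$, it bounds $\sum_v 3^{1-\ell_v}(\frac32+\ell_v)$ only by $\sum_{\ell\ge2}(2\ell+1)3^{1-\ell}(\ell+\frac32)$. The terms of this sum for $\ell=2,\dots,5$ already total $\frac{976}{81}>M$. What is needed is the cumulative (Landau) constraint: the $m$ teams with fewest losses have at least $\binom m2$ losses in total. Combined with discrete convexity of $x\mapsto(x+\frac52)3^{-x}$, this makes the loss profile $1,1,1,3,4,\dots,n-1$ extremal and gives $\sum_v t^0_v<\frac{25}{3}$. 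Adding the largest total bonus $P+Q=\frac{43}{12}$ (from $|H|=2$; for $|H|=3$ it is only $3R=\frac72$) gives $M$, which no tournament attains.

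\textbf{Main inequality.} The normalization is not a small correction. Let $u,v$ be the numbers of losses of $i,j$ outside $\{i,j\}$. When no bonus changes, $nM\bigl[3\bigl(r_i(T)-r_i(T')\bigr)-\bigl(r_j(T')-r_j(T)\bigr)\bigr]=8\bigl(\frac{n-1-u}{3^u}+\frac{v+1}{3^v}\bigr)$. Here the change in total score contributes $-\frac83(u+1)s_i(T)$, which nearly cancels $\frac83 n s_i(T)$ when $u$ is close to $n-2$. More importantly, substituting $B,P,Q,R$ does not suffice. Suppose $i$ is the unique almost-Condorcet winner of $T$ and leaves $H$, so its blocker loses $B$. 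Then the same quantity equals $8\bigl(\frac{n-2}{3}+\frac{v+1}{3^v}\bigr)-\frac{89}{9}$, which is negative at $(n,v)=(4,2)$ and $(5,3)$. The argument closes only after proving combinatorially that these small configurations are impossible. Analogous feasibility facts are needed in the other bonus-decreasing transitions for $G\le 3L$: $n\ge5$ when $|H|$ drops from $2$ to $1$, and $n\ge v+3$ when the external blocker $j$ becomes the unique blocker. None of this appears in your plan.
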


\noindent\emph{Proof overview.}
After adding the bonuses to related teams, the main target for the bound $\lambda=2$ is to compare the loss of $i$ with the gain of $j$. The gain–loss analysis splits into three cases: bonuses do not change; bonuses change but no Condorcet winner is created; or bonuses change and a new Condorcet winner is created.” The complete proof flow is illustrated in Figure~\ref{fig:roadmap}.

\begin{figure}[t]
\centering
\begingroup
\hyphenpenalty=10000
\exhyphenpenalty=10000

\definecolor{secfiveinput}{RGB}{242,246,250}
\definecolor{secfivecase}{RGB}{237,246,241}
\definecolor{secfiveresult}{RGB}{250,246,235}
\definecolor{secfivetheorem}{RGB}{232,240,251}
\definecolor{secfiveline}{RGB}{55,65,75}

\resizebox{\linewidth}{!}{%
\begin{tikzpicture}[
  box/.style={
    draw=secfiveline,
    rounded corners=0pt,
    fill=secfiveinput,
    line width=.55pt,
    align=center,
    font=\footnotesize\bfseries\boldmath,
    text width=3.25cm,
    minimum height=1.25cm,
    inner sep=3pt
  },
  casebox/.style={
    draw=secfiveline,
    rounded corners=0pt,
    fill=secfivecase,
    line width=.55pt,
    align=center,
    font=\footnotesize\bfseries,
    text width=3.25cm,
    minimum height=1.75cm,
    inner sep=3pt
  },
  resultbox/.style={
    draw=secfiveline,
    rounded corners=0pt,
    fill=secfiveresult,
    line width=.7pt,
    align=center,
    font=\footnotesize\bfseries\boldmath,
    text width=2.8cm,
    minimum height=1.25cm,
    inner sep=5pt
  },
  theorem/.style={
    draw=secfiveline,
    rounded corners=0pt,
    fill=secfivetheorem,
    line width=.8pt,
    align=center,
    font=\footnotesize\bfseries,
    text width=3.5cm,
    minimum height=1.0cm,
    inner sep=6pt
  },
  arrow/.style={->,>=stealth,line width=.65pt,draw=secfiveline},
  line/.style={line width=.65pt,draw=secfiveline}
]

\node[box,text width=4.6cm] (unbonused) at (-3.75,6.1)
  {Unbonused loss and gain\\(Lemma~\ref{lem:unbonused-local})};
\node[box,text width=4cm] (full) at (3.75,6.1)
  {Total loss and gain\\(Lemma~\ref{lem:bonus-table})};

\node[resultbox,text width=3cm] (welldefined) at (-7.5,3.0)
  {Well-definedness\\(Lemma~\ref{lem:rule2-well-defined})};
\node[casebox,text width=3cm] (nochange) at (-3.75,3.0)
  {No bonus change};
\node[casebox,text width=3cm] (nocw) at (0,3.0)
  {Bonus change,\\no Condorcet winner\\(Lemma~\ref{lem:no-cw-bonus-cases})};
\node[casebox,text width=3cm] (newcw) at (3.75,3.0)
  {Bonus change,\\new Condorcet winner\\(Lemma~\ref{lem:condorcet-creation})};
\node[resultbox,text width=3cm] (monotonicity) at (7.5,3.0)
  {Monotonicity\\(Lemma~\ref{lem:bonus-monotonicity})};

\node[resultbox] (nm) at (0,.45) {$2$-NM$_2$};
\node[theorem] (theorem) at (0,-2.0)
  {Theorem~\ref{the:2NM}};

\draw[arrow] (unbonused) -- (full);
\draw[arrow] (unbonused) -- (nochange);

\coordinate (split) at (3.75,4.65);
\draw[line] (full.south) -- (split);
\draw[line] (0,4.65) -- (7.5,4.65);
\draw[arrow] (0,4.65) -- (nocw.north);
\draw[arrow] (3.75,4.65) -- (newcw.north);
\draw[arrow] (7.5,4.65) -- (monotonicity.north);

\coordinate (casebus) at (0,1.65);
\draw[line] (nochange.south) -- (-3.75,1.65);
\draw[line] (nocw.south) -- (0,1.65);
\draw[line] (newcw.south) -- (3.75,1.65);
\draw[line] (-3.75,1.65) -- (3.75,1.65);
\draw[arrow] (casebus) -- (nm.north);

\coordinate (theorembus) at (0,-.85);
\draw[line] (welldefined.south) -- (-7.5,-.85);
\draw[line] (monotonicity.south) -- (7.5,-.85);
\draw[line] (-7.5,-.85) -- (7.5,-.85);
\draw[line] (nm.south) -- (theorembus);
\draw[arrow] (theorembus) -- (theorem.north);

\end{tikzpicture}%
}
\endgroup
\caption{A roadmap of the proof of Theorem~\ref{the:2NM}. Three light green boxes are core parts.}
\label{fig:roadmap}
\end{figure}

\medskip

For tournaments without a Condorcet winner, the normalization in Rule~\ref{rule:2NM} makes the values $r_i(T)$ sum to one. Step $3$ also covers all cases since there are at most three almost-Condorcet winners. It remains to show that every $r_i(T)$ is nonnegative, as established in Lemma~\ref{lem:rule2-well-defined}. The detailed proof is deferred to Appendix~\ref{app:welldefine}.

\begin{lemma}[]
\label{lem:rule2-well-defined}
Rule~\ref{rule:2NM} defines a probability distribution on every tournament.
\end{lemma}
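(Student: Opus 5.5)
By construction the values $r_i(T)$ sum to one: in Step~1 trivially, and in Step~4 because $\sum_i t_i(T)/M+\bigl(1-\sum_v t_v(T)/M\bigr)=1$. Step~3 covers every tournament, since at most three teams can have exactly one loss. So the only thing to prove is $r_i(T)\ge 0$ when $T$ has no Condorcet winner. Every $t_i(T)\ge 0$, so it suffices to prove the single global inequality
\[
\sum_{v\in N} t_v(T)\le M=\tfrac{143}{12}.
\]
My plan is to establish this by bounding the unbonused total and the bonus total separately, while keeping track of how they interact through $H(T)$.

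\textbf{Step 1: rewrite the unbonused total.} Each $s_v$ is counted once in $\frac32 s_v$ and once for every team that beats $v$. Hence
\[
\sum_v t^0_v(T)=\sum_v\bigl(\tfrac32+\elloss_v(T)\bigr)s_v(T).
\]
Write $k_v=n-2-d_v(T)=\elloss_v(T)-1\ge 0$; this is nonnegative because there is no Condorcet winner. Then each team contributes $f(k_v)$, where $f(k)=(\tfrac52+k)3^{-k}$ is decreasing, with $f(0)=\tfrac52$, $f(1)=\tfrac76$, $f(2)=\tfrac12$, and $\sum_{k\ge1}f(k)=2$.

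\textbf{Step 2: a Landau-type counting bound.} Take any $c$ teams that all have at most $m+1$ losses. Their internal matches produce $\binom c2$ losses, so $\binom c2\le c(m+1)$, which gives $c\le 2m+3$. So at most $3$ teams have $k=0$, at most $5$ have $k\le1$, and in general at most $2m+3$ have $k\le m$. Since $f$ is decreasing, the sum is maximised by filling greedily. This gives $\sum_v t^0_v\le 3f(0)+2\sum_{k\ge1}f(k)=\tfrac{15}{2}+4=\tfrac{23}{2}$, which is below $M$.

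\textbf{Step 3: the bonus cases.} I would split on $|H(T)|$ and rerun the greedy bound with the number of $k=0$ teams fixed to $|H(T)|$.
\begin{itemize}
\item $|H|=3$: the three members form a directed triangle. Each has its single loss inside the triangle, so it beats all other teams, and every other team has at least $3$ losses. This sharply reduces the tail.
\item $|H|=1$: the unbonused sum is at most $\tfrac52$ plus a tail of at most $4\cdot\tfrac76+\dots$. The blocker $x$ of $h$ is not in $H$, so $k_x\ge 1$.
\item $|H|=2$, $H=\{a,b\}$ with $a\to b$: the internal blocker $a$ loses only to the external blocker $c$. Both $a$ and $b$ beat everyone outside $H\cup\{c\}$, so all teams other than $a,b,c$ have at least $2$ losses, and $c$ itself has at least one loss. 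These structural facts shift the tail to larger $k$ and absorb the bonus $P+Q=\tfrac{43}{12}$.
\end{itemize}
In each case I would check that the refined unbonused bound plus the bonus total is at most $\tfrac{143}{12}$. The value $M=\tfrac{143}{12}$ looks chosen to make exactly one of these cases tight.

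\textbf{Main obstacle.} The hard part is the case $|H|=2$. The naive greedy count from Step~2 overshoots there (it gives roughly $13.75>M$). I will need the sharper constraint that every team outside $\{a,b,c\}$ has at least two losses (one each to $a$ and $b$), together with a re-derived count that applies the Step~2 inequality to the teams outside $H$ with their losses shifted down by two. I would do this case first, and tune the counting argument until the bound meets $M$.
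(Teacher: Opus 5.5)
\textbf{There is a genuine gap in Step 3: your counting tool cannot close the cases $|H(T)|=2$ and $|H(T)|=3$.} Your reduction to $\sum_v t_v(T)\le M$ and the identity $\sum_v t^0_v(T)=\sum_v f(k_v)$ agree with the paper. Your count also settles $|H(T)|\in\{0,1\}$; for instance, $\tfrac{53}{6}+B=\tfrac{407}{36}<M$.

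The problem is that the threshold inequality $\binom{c}{2}\le c(m+1)$ uses only the \emph{maximum} number of losses in a group. It therefore admits loss profiles such as $0,1,1,2,2,3,3,\dots$ in a sub-tournament, which no tournament realizes: the three teams with fewest losses would have only $2<\binom32$ losses.

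\emph{Case $|H(T)|=3$.} Every outside team has $k_v=2+\ell^{\mathrm{out}}_v$, where $\ell^{\mathrm{out}}_v$ counts its losses outside $H$. Even applying your count to the outside sub-tournament, the tail can be as large as $f(2)+2\sum_{k\ge3}f(k)=\tfrac12+\tfrac23$. Your bound then tends to $\tfrac{15}{2}+\tfrac76+3R=\tfrac{146}{12}>M$.

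\emph{Case $|H(T)|=2$.} Team $c$ has only one forced loss to $H$, since it beats $a$. Writing $S=N\setminus H$ and $\ell^S_v$ for losses inside $S$, one has $k_c=\ell^S_c$ but $k_v=1+\ell^S_v$ for $v\ne c$. So shifting all losses in $S$ down by two, as you propose, is not justified for $c$. The threshold count on $S$ then permits the $k$-values $1,1,2,3,3,4,4,\dots$, whose $f$-sum tends to $\tfrac72$. This gives $5+\tfrac72+\tfrac{43}{12}=\tfrac{145}{12}>M$.

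So ``tune the counting argument until the bound meets $M$'' is exactly the missing step, and it cannot be done with the inequality you have.

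\textbf{The missing ingredient is the sum form of Landau's condition.} Any $c$ teams suffer at least $\binom c2$ losses in total, because each of their internal matches produces one loss. Combine this with the discrete convexity of $f$, namely $f(x)-2f(x+1)+f(x+2)>0$. A majorization (Karamata) argument then shows the following: over tournaments without a Condorcet winner, $\sum_v f(k_v)$ is maximized by a directed triangle beating a transitive tournament, i.e.\ $k$-values $0,0,0,2,3,\dots,n-2$.

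This yields the uniform bound $\sum_v t^0_v(T)<\tfrac{15}{2}+\sum_{k\ge2}f(k)=\tfrac{25}{3}$. Adding the largest possible total bonus $P+Q=\tfrac{43}{12}$ gives exactly $M=\tfrac{143}{12}$. That is the paper's proof, and it needs no case split on $|H(T)|$.

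Contrary to your guess, no single case is tight. The unbonused extremum occurs when $|H(T)|=3$, where the bonus is only $3R=\tfrac72$. If you prefer to keep your case split, apply the same majorization argument to the sub-tournament outside $H$; for $|H(T)|=3$ it gives $\tfrac{15}{2}+\tfrac56+\tfrac72=\tfrac{142}{12}<M$.
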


To analyze manipulation, we first compare the loss and gain due to the unbonused win-strength scores. Fix \(\{i,j\}\)-adjacent tournaments \(T,T'\) on \(N\) such that \(i\to_T j\) and \(j\to_{T'}i\), and suppose that neither tournament has a Condorcet winner. Let \(u\) and \(v\) denote the numbers of losses of \(i\) and \(j\) to teams outside \(\{i,j\}\), respectively. Thus, we have $s_i(T)=3^{1-u}$ and $s_j(T)=3^{-v}$ with $u,v\ge1$. After the reversal, \(s_i(T')=\frac{s_i(T)}{3}\) and \(s_j(T')=3s_j(T)\).

Let the total unbonused win-strength score be $W^0(T)=\sum_{q\in N}t_k^0(T)$,
and, for \(q\in N\), apply the normalization in Rule~\ref{rule:2NM} to the unbonused scores to define
\[
    r_k^0(T)
    =
    \frac{t_k^0(T)}{M}
    +\frac1n\left(1-\frac{W^0(T)}{M}\right).
\]
Define \(W^0(T')\) and \(r_k^0(T')\) analogously. Recall that, for a monotone rule, the definition of $2$-NM$_\lambda$ requires \(j\)'s gain in winning probability to be at most \(\lambda+1\) times \(i\)'s loss. We establish monotonicity separately below. For \(\lambda=2\), our objective is
\begin{equation}
\label{eq:unbonused-gain-loss}
    r_j(T')-r_j(T)
    \le
    3\bigl(r_i(T)-r_i(T')\bigr).
\end{equation}
To study this comparison for the unbonused scores, define the scaled unbonused loss of \(i\) and gain of \(j\) by
\[
    L_0
    :=
    nM\bigl(r_i^0(T)-r_i^0(T')\bigr),
    \quad
    G_0
    :=
    nM\bigl(r_j^0(T')-r_j^0(T)\bigr).
\]
For the unbonused scores, this target takes the form \(G_0\le3L_0\), or equivalently \(3L_0-G_0\ge0\).

\begin{lemma}[]
\label{lem:unbonused-local}
The scaled unbonused loss and gain satisfy
\begin{equation}
\label{eq:base-slack}
    3L_0-G_0
    =
    8\left(
        \frac{n-1-u}{3^u}
        +
        \frac{v+1}{3^v}
    \right)
    >0.
\end{equation}
In particular, if every team's bonus is unchanged between \(T\) and \(T'\), then \eqref{eq:unbonused-gain-loss} holds.
\end{lemma}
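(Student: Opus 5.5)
The plan is a direct bookkeeping computation. The key observation is that reversing the match between $i$ and $j$ changes only two base scores, $s_i$ and $s_j$. So every change in an unbonused win-strength score can be written in terms of $s_i:=s_i(T)=3^{1-u}$ and $s_j:=s_j(T)=3^{-v}$.

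First I record the elementary changes. From $s_i(T')=s_i/3$ and $s_j(T')=3s_j$ we get $\Delta s_i=-\tfrac23 s_i$ and $\Delta s_j=2s_j$. All other base scores are unchanged. Writing $\Delta t^0_k:=t^0_k(T')-t^0_k(T)$, I then compute three kinds of change.
\begin{itemize}
\item For $i$: its own term changes by $\tfrac32\Delta s_i$, and it loses the summand $s_j(T)$ because it no longer beats $j$. Hence $\Delta t^0_i=-s_i-s_j$.
\item For $j$: its own term changes by $\tfrac32\Delta s_j$, and it gains the summand $s_i(T')$. Hence $\Delta t^0_j=3s_j+\tfrac13 s_i$.
\item For $k\notin\{i,j\}$: the only possible change comes from beating $i$ or $j$. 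Exactly $u$ outside teams beat $i$, each changing by $-\tfrac23 s_i$, and exactly $v$ outside teams beat $j$, each changing by $2s_j$.
\end{itemize}
Summing all of these gives
\[
\Delta W^0=-\tfrac23(1+u)s_i+2(1+v)s_j .
\]

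Next I use that $nM\,r^0_k=n\,t^0_k+M-W^0$. This identity gives
\[
L_0=-n\Delta t^0_i+\Delta W^0=n(s_i+s_j)+\Delta W^0,
\qquad
G_0=n\Delta t^0_j-\Delta W^0 .
\]
Therefore
\[
3L_0-G_0=3n(s_i+s_j)-n\bigl(3s_j+\tfrac13 s_i\bigr)+4\Delta W^0=\tfrac83 s_i\,(n-1-u)+8(1+v)s_j .
\]
Substituting $s_i=3^{1-u}$ and $s_j=3^{-v}$ yields exactly \eqref{eq:base-slack}.

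For positivity, both summands in \eqref{eq:base-slack} are nonnegative. In fact $(v+1)/3^v>0$, and $n-1-u=d_i(T)\ge1$ since $i\to_T j$, so the expression is strictly positive.

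For the ``in particular'' part, suppose every bonus $b_k$ is the same in $T$ and $T'$. Then $t_k(T')-t_k(T)=\Delta t^0_k$ for every $k$, and the total bonused score changes by exactly $\Delta W^0$. By the same normalization identity, $nM(r_i(T)-r_i(T'))=L_0$ and $nM(r_j(T')-r_j(T))=G_0$. Thus $G_0\le 3L_0$ is precisely \eqref{eq:unbonused-gain-loss}.

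The only delicate step is keeping track of which outside teams' sums change. This is handled by the definitions of $u$ and $v$ as the numbers of outside losses; everything else is routine algebra.
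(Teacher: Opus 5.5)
Your proposal is correct and follows essentially the same route as the paper: you express $\Delta t^0_i$, $\Delta t^0_j$, and $\Delta W^0$ in terms of $s_i(T)$ and $s_j(T)$ and then apply the normalization identity $nMr^0_k=nt^0_k+M-W^0$. The differences are cosmetic. You combine these changes directly into $3L_0-G_0$ rather than first writing $L_0$ and $G_0$ out explicitly (the paper's explicit form of $L_0$ is reused later in the monotonicity lemma), and your positivity argument $n-1-u=d_i(T)\ge 1$ is the same fact as the paper's $u\le n-2$.
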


\begin{proof}
The unbonused win-strength score of team \(i\) changes by
\[
    t_i^0(T')-t_i^0(T)
    =
    \frac32\left(\frac{s_i(T)}{3}-s_i(T)\right)-s_j(T)
    =
    -s_i(T)-s_j(T)
\]
and that of team \(j\) changes by
\[
    t_j^0(T')-t_j^0(T)
    =
    \frac32(3s_j(T)-s_j(T))+\frac{s_i(T)}{3}
    =
    3s_j(T)+\frac{s_i(T)}{3}.
\]
For each team outside \(\{i,j\}\), the contribution of \(i\) to its unbonused win-strength score decreases by \(2s_i(T)/3\) if it defeats \(i\), and the contribution of \(j\) increases by \(2s_j(T)\) if it defeats \(j\). Combining these changes with those of \(i\) and \(j\) gives
\[
    W^0(T')-W^0(T)
    =
    -\frac23(u+1)s_i(T)+2(v+1)s_j(T).
\]
To express \(L_0\) and \(G_0\) in terms of these score changes, use
\[
    nMr_k^0(T)=nt_k^0(T)+M-W^0(T),
\]
which yields
\begin{align}
\label{eq:L0}
    L_0
    &=
    \left(n-1+\frac13-\frac{2u}{3}\right)s_i(T)
    +
    (n-1+3+2v)s_j(T),\\
    G_0
    &=
    \left(\frac{n-1}{3}+1+\frac{2u}{3}\right)s_i(T)
    +
    (3(n-1)+1-2v)s_j(T).\notag
\end{align}
Subtracting \(G_0\) from \(3L_0\) gives
\[
    3L_0-G_0
    =
    \frac{8(n-1-u)}{3}\,s_i(T)+8(v+1)s_j(T)
    =
    8\left(
        \frac{n-1-u}{3^u}
        +
        \frac{v+1}{3^v}
    \right).
\]
Since \(u\le n-2\), both terms in the last expression are nonnegative and the first is strictly positive. When every team's bonus is unchanged, \(L_0\) and \(G_0\) are also the loss of \(i\) and gain of \(j\) under the full rule, scaled by \(nM\). Thus \eqref{eq:base-slack} implies \eqref{eq:unbonused-gain-loss}.
\end{proof}

We continue to assume that neither \(T\) nor \(T'\) has a Condorcet winner. Lemma~\ref{lem:unbonused-local} establishes the desired gain--loss inequality when every team's bonus is unchanged. To account for changes in the bonuses, define the scaled loss of \(i\) and gain of \(j\) under the full rule by
\[
    L=nM\bigl(r_i(T)-r_i(T')\bigr),
    \quad
    G=nM\bigl(r_j(T')-r_j(T)\bigr).
\]
The target \eqref{eq:unbonused-gain-loss} is now \(G\le3L\), or equivalently \(3L-G\ge0\). Let
\[
    \delta_i=b_i(T')-b_i(T),
    \quad
    \delta_j=b_j(T')-b_j(T)
\]
denote the changes in the bonuses of \(i\) and \(j\), respectively, and let $\Delta=\sum_{q\in N}b_q(T')-\sum_{q\in N}b_q(T)$ denote the change in the total bonus.

\begin{lemma}[]
\label{lem:bonus-table}
The full scaled loss and gain satisfy
\begin{equation}
\label{eq:master-bonus}
    L=L_0-n\delta_i+\Delta,
    \quad
    G=G_0+n\delta_j-\Delta,
\end{equation}
and therefore
\begin{equation}
\label{eq:3LminusG}
    3L-G
    =
    3L_0-G_0
    +
    4\Delta
    -
    3n\delta_i
    -
    n\delta_j.
\end{equation}
Moreover, Table~\ref{tab:bonus-transitions} lists all possible changes in the bonuses.
\end{lemma}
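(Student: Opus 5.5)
The identities \eqref{eq:master-bonus} are pure bookkeeping from the normalization in Rule~\ref{rule:2NM}, and I would derive them first. For any tournament without a Condorcet winner and any team $k$, multiplying the definition of $r_k$ by $nM$ gives
\[
nM\,r_k(T)=n\,t_k(T)+M-W(T),\qquad W(T)=\sum_{q\in N}t_q(T)=W^0(T)+\sum_{q\in N}b_q(T).
\]
Since $t_k=t_k^0+b_k$, subtracting the unbonused version $nM\,r_k^0(T)=n\,t_k^0(T)+M-W^0(T)$ yields
\[
nM\,r_k(T)=nM\,r_k^0(T)+n\,b_k(T)-\sum_{q}b_q(T).
\]
Apply this to $k=i$ in $T$ and $T'$ and subtract, which gives $L=L_0-n\delta_i+\Delta$. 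Apply it to $k=j$, which gives $G=G_0+n\delta_j-\Delta$. Then $3L-G$ is a linear combination, and \eqref{eq:3LminusG} follows immediately.

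The substantive part is the classification of bonus transitions behind Table~\ref{tab:bonus-transitions}. The key structural observation is that only $d_i$ and $d_j$ change: $d_i$ drops by one and $d_j$ rises by one, while all other out-degrees and all edges not between $i$ and $j$ stay fixed. Since neither tournament has a Condorcet winner, I would record three consequences.
\begin{itemize}
\item $i\notin H(T')$, since otherwise $d_i(T)=n-1$.
\item $j\notin H(T)$, since otherwise $j$ would be a Condorcet winner of $T'$.
\item Every other team keeps its membership in $H$.
\end{itemize}
Hence $H(T')=\bigl(H(T)\setminus\{i\}\bigr)\cup J$, where $i\in H(T)$ exactly when $u=1$, and $J=\{j\}$ if $v=1$ and $J=\varnothing$ otherwise. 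So there are four subcases, $(u=1\text{ or not})\times(v=1\text{ or not})$, crossed with $|H(T)|\in\{0,1,2,3\}$. Combined with the bound $|H|\le 3$, this leaves a short finite list.

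For each admissible pair $(H(T),H(T'))$ I would identify the blockers in both tournaments and read off $\delta_i$, $\delta_j$ and $\Delta$. Blockers are determined by the unique loss of each member of $H$. For members of $H$ other than $i$ and $j$, that loss is to a team whose edge is unchanged, so its blocker is unchanged. The only role changes therefore come from the following sources.
\begin{itemize}
\item $i$ leaving $H$, which removes the bonus its blocker held or changes the internal/external blocker structure.
\item $j$ entering $H$, whose blocker in $T'$ is its unique outside conqueror.
\item Whether $i$ or $j$ itself is a blocker of some fixed $h\in H$. For instance, $i$ may be the blocker of $h$ in both tournaments, or $j$ may become an internal blocker.
\end{itemize}
I must also handle the case where the number of almost-Condorcet winners changes, for example $\{h\}\to\{h,j\}$ or $\{a,b,i\}\to\{a,b\}$. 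In these cases the bonus values switch among $B$, $P$, $Q$ and $R$, and several teams' bonuses change simultaneously.

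The main obstacle is making the enumeration provably exhaustive and consistent. In particular, I must check which configurations are actually realizable. For example, when $H(T)=\{a,b,c\}$ forms a $3$-cycle, each member's unique loss is inside the cycle; so if $i$ is in the cycle, its loss is to a cycle member and not to $j$. I also need to rule out $j$ entering when that would create four almost-Condorcet winners. I would organize the table by the pair $(|H(T)|,|H(T')|)$, and within each row by the roles of $i$ and $j$ (member of $H$, blocker, or neither). For each entry I would verify, via the tournament structure, that it occurs, and then record $(\delta_i,\delta_j,\Delta)$ as explicit combinations of $B$, $P$, $Q$ and $R$.
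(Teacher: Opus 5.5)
Your proposal is correct and follows the paper's proof: the identities come from the same normalization $nM\,r_q(T)=n\,t_q(T)+M-W(T)$, and the table's exhaustiveness rests on the same observation that only $i$ can leave $H$, only $j$ can enter it, and every other member and its blocker are unchanged, combined with $|H|\le 3$. Your characterization of these moves via $u=1$ and $v=1$ is slightly tidier bookkeeping; the one point to state explicitly, as the paper does, is the consequence of the $3$-cycle structure you already note: the two members of $H(T)$ that survive both beat $j$, so $j\notin H(T')$ and the $3\to3$ exchange, case (j), is impossible.
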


\begin{table*}[!t]
\centering
\setlength{\belowcaptionskip}{4pt}
\small
\setlength{\tabcolsep}{5.5pt}
\renewcommand{\arraystretch}{1.12}
\begin{tabular}{cclccc}
\toprule
Case & \(|H(T)|\to|H(T')|\) & Role change & \(\delta_i\) & \(\delta_j\) & \(\Delta\)\\
\midrule
(a) & fixed & \makecell[l]{almost-Condorcet winners\\and blocker roles unchanged}
    & \(0\) & \(0\) & \(0\)\\
(b) & \(0\to1\) & \(j\) enters \(H\)
    & \(0\) & \(0\) & \(B\)\\
(c) & \(1\to0\) & \(i\) leaves \(H\)
    & \(0\) & \(0\) & \(-B\)\\
(d) & \(1\to1\) & \(i\) is replaced by \(j\)
    & \(0\) & \(0\) & \(0\)\\
(e\(_1\)) & \(1\to2\) & \(j:B\to P\)
    & \(0\) & \(P-B=-11/36\) & \(10/9\)\\
(e\(_2\)) & \(1\to2\) & old blocker \(i:B\to Q\)
    & \(Q-B=-19/18\) & \(0\) & \(10/9\)\\
(e\(_3\)) & \(1\to2\) & old blocker \(\ell\notin\{i,j\}\)
    & \(0\) & \(0\) & \(10/9\)\\
(f\(_1\)) & \(2\to1\) & \(i:P\to B\)
    & \(B-P=11/36\) & \(0\) & \(-10/9\)\\
(f\(_2\)) & \(2\to1\) & external blocker \(j:Q\to B\)
    & \(0\) & \(B-Q=19/18\) & \(-10/9\)\\
(f\(_3\)) & \(2\to1\) & external blocker \(\ell\notin\{i,j\}\)
    & \(0\) & \(0\) & \(-10/9\)\\
(g\(_1\)) & \(2\to2\) & \(i:P\to Q\)
    & \(Q-P=-3/4\) & \(0\) & \(0\)\\
(g\(_2\)) & \(2\to2\) & \(j:Q\to P\)
    & \(0\) & \(P-Q=3/4\) & \(0\)\\
(g\(_3\)) & \(2\to2\) & \makecell[l]{neither \(i\)'s nor \(j\)'s\\bonus changes}
    & \(0\) & \(0\) & \(0\)\\
(h) & \(2\to3\) & \(j:Q\to R\)
    & \(0\) & \(R-Q=-1/4\) & \(-1/12\)\\
(i) & \(3\to2\) & \(i:R\to Q\)
    & \(Q-R=1/4\) & \(0\) & \(1/12\)\\
(j) & \(3\to3\) & exchange: \(i\) is replaced by \(j\)
    & \multicolumn{3}{c}{impossible}\\
\bottomrule
\end{tabular}
\caption{All possible bonus changes under the reversal
\(i\to_T j\) to \(j\to_{T'}i\), assuming neither tournament has a
Condorcet winner. An annotation such as \(j:B\to P\) records \(j\)'s bonus before and after the reversal.}
\label{tab:bonus-transitions}
\end{table*}

\begin{proof}
Let \(W(T)=\sum_{q\in N}t_q(T)\) denote the total score, with \(W(T')\) defined analogously. The normalization in Rule~\ref{rule:2NM} gives
\[
    nMr_q(T)=nt_q(T)+M-W(T).
\]
The bonus changes therefore contribute \(-n\delta_i+\Delta\) to the scaled loss of \(i\) and \(n\delta_j-\Delta\) to the scaled gain of \(j\). This yields \eqref{eq:master-bonus} and \eqref{eq:3LminusG}.

To establish that the table is exhaustive, consider how the reversal changes the almost-Condorcet winners. Only the win counts of \(i\) and \(j\) change: \(i\)'s decreases by one, and \(j\)'s increases by one. Thus only \(i\) can leave the set of almost-Condorcet winners, only \(j\) can enter it, and every other member remains. Since \(|H(T)|,|H(T')|\le3\), only the cardinality changes shown in Table~\ref{tab:bonus-transitions} are possible.

The numerical entries follow only from \(|H(T)|\) and \(|H(T')|\), while the almost-Condorcet structure determines the role changes. For example, in case (f\(_1\)), \(H(T)=\{i,h\}\) with \(i\to_T h\). Team \(i\) is the internal blocker in \(T\); after leaving the set of almost-Condorcet winners, it becomes the unique blocker of \(h\), so its bonus changes from \(P\) to \(B\). In case (f\(_2\)), the almost-Condorcet winner that remains after the reversal defeats \(i\), and its external blocker is \(j\). The bonus of \(j\) therefore changes from \(Q\) to \(B\). Figures~\ref{fig:sec5-transition-f1} and~\ref{fig:sec5-transition-f2} illustrate these two changes in blocker roles as examples: each diagram shows \(T\); the thick arrow marks \(i\to_T j\). Gold shading and double outlines identify the almost-Condorcet winners in \(T\). The annotation tables record win counts in \(T\) and bonus transitions from \(T\) to \(T'\). For diagrams of the remaining configurations in cases (b)--(i) of Table~\ref{tab:bonus-transitions}, see Appendix~\ref{app:bonus-transitions}. The remaining rows follow from the same structural description or by reversing these arguments.

\suppressfloats[t]
\begin{figure}[t]
\centering
\captionsetup{font=footnotesize,skip=4pt}
\begin{minipage}[t]{.48\linewidth}
\centering
\resizebox{\dimexpr\linewidth-6pt\relax}{!}{%
  \begingroup
\definecolor{bonusblue}{HTML}{24576E}
\definecolor{acgold}{HTML}{FFE8A3}
\begin{tikzpicture}[
  team/.style={circle,draw=black,fill=white,line width=.55pt,
    minimum size=5.2mm,inner sep=0pt,font=\small},
  ac team/.style={team,fill=acgold,double,double distance=.7pt,line width=.45pt},
  match/.style={-{Stealth[length=1.9mm,width=1.3mm]},line width=.6pt},
  reversed match/.style={match,line width=1.1pt,draw=bonusblue},
  every node/.style={font=\small}
]
\path[use as bounding box] (0,0) rectangle (7.4,4.45);

\node[anchor=west,inner sep=0pt,font=\bfseries\small] at (.1,4.25) {Case (f$_1$)};
\node[anchor=east,inner sep=0pt,font=\footnotesize] at (7.3,4.25) {$|H(T)|=2,\quad |H(T')|=1$};
\node[ac team] (i) at (2.700,2.070) {$\boldsymbol{i}$};
\node[team] (j) at (6.200,2.070) {$j$};
\node[ac team] (h) at (4.450,3.435) {$\boldsymbol{h}$};
\node[team] (ell) at (0.900,3.045) {$\ell$};
\draw[match] (ell) -- (i);
\draw[match] (i) -- (h);
\draw[match] (h) -- (j);
\draw[reversed match] (i) -- (j);
\node[anchor=north,inner sep=0pt,font=\fontsize{8.5}{10}\selectfont] at (3.7,1.5) {
\setlength{\tabcolsep}{3pt}
\renewcommand{\arraystretch}{1.0}
\begin{tabular}{@{}p{1.70cm}>{\columncolor{acgold}\centering\arraybackslash}p{1.08cm}>{\centering\arraybackslash}p{1.08cm}>{\columncolor{acgold}\centering\arraybackslash}p{1.08cm}>{\centering\arraybackslash}p{1.08cm}@{}}
\toprule
\textbf{Team} & $\boldsymbol{i}$ & $j$ & $\boldsymbol{h}$ & $\ell$ \\
\midrule
\textbf{Wins in $T$} & $n-2$ & \textemdash & $n-2$ & \textemdash \\
\textbf{Bonus} & $\color{bonusblue}P\to B$ & \textemdash & \textemdash & $\color{bonusblue}Q\to0$ \\
\bottomrule
\end{tabular}
};
\end{tikzpicture}
\endgroup%
}
\caption{Case (f\(_1\)): \(i\)'s bonus changes from \(P\) to \(B\).}
\label{fig:sec5-transition-f1}
\end{minipage}\hfill
\begin{minipage}[t]{.48\linewidth}
\centering
\resizebox{\dimexpr\linewidth-6pt\relax}{!}{%
  \begingroup
\definecolor{bonusblue}{HTML}{24576E}
\definecolor{acgold}{HTML}{FFE8A3}
\begin{tikzpicture}[
  team/.style={circle,draw=black,fill=white,line width=.55pt,
    minimum size=5.2mm,inner sep=0pt,font=\small},
  ac team/.style={team,fill=acgold,double,double distance=.7pt,line width=.45pt},
  match/.style={-{Stealth[length=1.9mm,width=1.3mm]},line width=.6pt},
  reversed match/.style={match,line width=1.1pt,draw=bonusblue},
  every node/.style={font=\small}
]
\path[use as bounding box] (0,0) rectangle (7.4,4.45);

\node[anchor=west,inner sep=0pt,font=\bfseries\small] at (.1,4.25) {Case (f$_2$)};
\node[anchor=east,inner sep=0pt,font=\footnotesize] at (7.3,4.25) {$|H(T)|=2,\quad |H(T')|=1$};
\node[ac team] (i) at (1.500,2.070) {$\boldsymbol{i}$};
\node[team] (j) at (5.900,2.070) {$j$};
\node[ac team] (h) at (3.700,3.435) {$\boldsymbol{h}$};
\draw[match] (h) -- (i);
\draw[reversed match] (i) -- (j);
\draw[match] (j) -- (h);
\node[anchor=north,inner sep=0pt,font=\fontsize{8.5}{10}\selectfont] at (3.7,1.5) {
\setlength{\tabcolsep}{3pt}
\renewcommand{\arraystretch}{1.0}
\begin{tabular}{@{}p{1.70cm}>{\columncolor{acgold}\centering\arraybackslash}p{1.51cm}>{\centering\arraybackslash}p{1.51cm}>{\columncolor{acgold}[3pt][0pt]\centering\arraybackslash}p{1.51cm}@{}}
\toprule
\textbf{Team} & $\boldsymbol{i}$ & $j$ & $\boldsymbol{h}$ \\
\midrule
\textbf{Wins in $T$} & $n-2$ & \textemdash & $n-2$ \\
\textbf{Bonus} & \textemdash & $\color{bonusblue}Q\to B$ & $\color{bonusblue}P\to0$ \\
\bottomrule
\end{tabular}
};
\end{tikzpicture}
\endgroup%
}
\caption{Case (f\(_2\)): \(j\)'s bonus changes from \(Q\) to \(B\).}
\label{fig:sec5-transition-f2}
\end{minipage}
\par\smallskip
\begin{minipage}{\linewidth}

\end{minipage}
\end{figure}

Finally, the exchange in case (j) is impossible. The three almost-Condorcet winners in \(T\) form a directed cycle and defeat every team outside \(H(T)\), including \(j\). The two teams that remain almost-Condorcet winners after the reversal therefore both defeat \(j\). Thus, $j$ cannot be an almost-Condorcet winner.
\end{proof}

Recall that \(u\) and \(v\) are the losses of \(i\) and \(j\) to teams outside \(\{i,j\}\), respectively. The analysis for \(G\le3L\) uses the following three properties on \(n,u,v\) in small tournaments, and we defer the proofs to Appendix~\ref{app:property}.

\begin{lemma}[]
\label{lem:small-feasibility}
Assume that neither \(T\) nor \(T'\) has a Condorcet winner.
\begin{enumerate}
\item If \(|H(T)|=1\) and \(|H(T')|=0\), then \(n=4\) is impossible. Moreover, if \(n=5\), then \(v=3\) is impossible.
\item If \(|H(T)|=2\) and \(|H(T')|=1\), then \(n\ge5\).
\item In case \emph{(f\(_2\))}, one has \(n\ge v+3\).
\end{enumerate}
\end{lemma}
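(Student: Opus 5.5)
All three parts are finite counting arguments about win and loss counts. I would work from the following facts, all forced by the standing setup.
\begin{itemize}
\item The reversal lowers $d_i$ by one, raises $d_j$ by one, and leaves every other outdegree unchanged.
\item By Lemma~\ref{lem:bonus-table}, only $i$ can leave $H$ and only $j$ can enter it.
\item In $T$, team $i$ has exactly $u$ losses and team $j$ has exactly $v+1$ losses.
\item In $T'$, team $i$ has $u+1$ losses and team $j$ has $v$ losses.
\item Since there is no Condorcet winner, no team has zero losses in $T$ or in $T'$.
\end{itemize}

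For part 1, $|H(T)|=1$ and $|H(T')|=0$ force $H(T)=\{i\}$, so $u=1$. Let $x\notin\{i,j\}$ be the unique team beating $i$. Since $j\notin H(T')$, $j$ has at least two losses in $T'$, i.e.\ $v\ge2$.
\begin{itemize}
\item When $n=4$, write $N=\{i,j,x,y\}$. Then $v\ge 2$ means $j$ loses to both $x$ and $y$, and $i$ beats $y$. Team $x$ beats $i$ and $j$, so it cannot also beat $y$, or it would be a Condorcet winner in $T$. Hence $y\to_T x$. Then $y$ beats $x$ and $j$ and loses only to $i$, so $y\in H(T)$, contradicting $|H(T)|=1$.
\item When $n=5$ and $v=3$, write the outside teams as $x,y,z$. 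Team $j$ loses to everyone, and $i$ beats $y$ and $z$. No team in $\{x,y,z\}$ may have zero or one loss in $T$. Team $x$ already beats $i$ and $j$, so it must lose to both $y$ and $z$. Team $y$ has lost only to $i$ so far, so it must also lose to $z$. Then $z$ loses only to $i$, so $z\in H(T)$, a contradiction.
\end{itemize}

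For part 2, $H(T)=\{i,h\}$ and $j\notin H(T')$. For $n=3$, any tournament without a Condorcet winner is a $3$-cycle with $|H|=3$, so $n=3$ is impossible. For $n=4$, teams $i$ and $h$ have $2$ wins each, so the remaining two teams share $6-4=2$ wins. Neither of them can have $3$ wins (Condorcet winner) or $2$ wins (membership in $H(T)$). Hence each has exactly one win. In particular $d_j(T)=1$, so $d_j(T')=2=n-2$, which puts $j\in H(T')$. Then $\{h,j\}\subseteq H(T')$, contradicting $|H(T')|=1$. Therefore $n\ge5$.

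For part 3, in case (f$_2$) we have $H(T)=\{i,h\}$ with $h\to_T i$, so $h$ is the internal blocker, and $j$ is its external blocker, so $j\to_T h$. The $v$ losses of $j$ to teams outside $\{i,j\}$ therefore avoid $h$. They lie among the $n-3$ teams of $N\setminus\{i,j,h\}$, giving $v\le n-3$.

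The only delicate step is the $n=5$, $v=3$ subcase of part 1. There the small sub-tournament on $\{x,y,z\}$ has to be pinned down edge by edge using both the ``no Condorcet winner'' and the ``no second almost-Condorcet winner'' constraints. I would present that chain of forced orientations explicitly.
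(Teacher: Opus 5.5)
Your proof is correct and follows the paper's argument essentially step for step in parts 1 and 3: the same forced-orientation chains for $n=4$ and for $n=5,\ v=3$, and the same observation $j\to_T h$ giving $v\le n-3$. The only variation is in part 2, where you replace the paper's case split on the $i$--$h$ match with a degree-sum count forcing $d_j(T)=1$, so that $j$ enters $H(T')$ alongside $h$; this is equally valid, and you also explicitly rule out $n=3$, which the paper leaves implicit.
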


Cases in which every team's bonus is unchanged are already covered by Lemma~\ref{lem:unbonused-local}. When the bonuses change, we now use the feasibility constraints to prove \(G\le3L\) for all transitions in which neither tournament has a Condorcet winner.  The detailed analysis is deferred to Appendix~\ref{app:proof-no-cw-bonus-cases}.

\begin{lemma}[]
\label{lem:no-cw-bonus-cases}
If neither \(T\) nor \(T'\) has a Condorcet winner, then every case in Table~\ref{tab:bonus-transitions} satisfies $G\le3L$.

\end{lemma}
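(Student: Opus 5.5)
The plan is to work directly from the identity \eqref{eq:3LminusG} in Lemma~\ref{lem:bonus-table},
\[
3L-G=3L_0-G_0+4\Delta-3n\delta_i-n\delta_j,
\]
together with the explicit slack from Lemma~\ref{lem:unbonused-local},
\[
3L_0-G_0=8\Bigl(\frac{n-1-u}{3^u}+\frac{v+1}{3^v}\Bigr),
\]
and to check each row of Table~\ref{tab:bonus-transitions}. Row (a) is Lemma~\ref{lem:unbonused-local} itself, and row (j) is impossible.

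\textbf{Easy rows.} I first dispose of every row in which $\Delta\ge0$, $\delta_i\le0$ and $\delta_j\le0$. In those rows each of the three correction terms $4\Delta$, $-3n\delta_i$, $-n\delta_j$ is nonnegative, so $3L-G\ge 3L_0-G_0>0$. These rows are (b), (d), (e$_1$), (e$_2$), (e$_3$), (g$_1$) and (g$_3$). Row (h) needs one extra line: the corrections sum to $n/4-1/3$, which is nonnegative for $n\ge2$.

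\textbf{Structural observation for the remaining rows.} The remaining rows are (c), (f$_1$), (f$_2$), (f$_3$), (g$_2$) and (i). In each of them $i$ is an almost-Condorcet winner of $T$. Row (g$_2$) needs a short argument: $H(T)=\{a,i\}$ with $a\to_T i$, and $j$ is the external blocker, so $i$ is the member that leaves $H$. Since $i$ has exactly one loss in $T$ and beats $j$, that loss is to an outside team, so $u=1$. The slack then simplifies to
\[
3L_0-G_0=\frac{8(n-2)}{3}+\frac{8(v+1)}{3^v},
\]
with $v\ge1$ and $v\le n-2$.

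\textbf{Checking the hard rows.}
\begin{itemize}
\item Rows (g$_2$) and (i): the deficits are $3n/4$ and $3n/4-1/3$. Both are dominated by $8(n-2)/3$ once $n\ge3$.
\item Row (f$_3$): the deficit is $40/9$. Lemma~\ref{lem:small-feasibility}(2) gives $n\ge5$, so $8(n-2)/3\ge8$ suffices.
\item Row (c): the deficit is $4B=89/9$. Lemma~\ref{lem:small-feasibility}(1) excludes $n=4$. For $n\ge6$ the first term alone gives $32/3>89/9$. For $n=5$, the exclusion of $v=3$ forces $v\le2$, so the $v$-term is at least $8/3$ and the total is $32/3$.
\item Row (f$_1$): the deficit is $40/9+11n/12$. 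Since $n\ge5$, the first term beats the deficit for $n\ge6$. At $n=5$, $v\le3$ gives a $v$-term of at least $32/27$, which closes the gap.
\item Row (f$_2$): the deficit is $40/9+19n/18$. Here Lemma~\ref{lem:small-feasibility}(3), $n\ge v+3$, is essential. When $v$ is small ($v=1,2$), the $v$-term ($16/3$ or $8/3$) covers the shortfall at $n=5$. When $v$ is large, $n$ is forced large, and $8(n-2)/3-19n/18$ alone exceeds $40/9$.
\end{itemize}

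\textbf{Main obstacle.} I expect the small-$n$ boundary cases of (c), (f$_1$) and (f$_2$) to be the tightest. This is where the specific constants $B,P,Q$ and the feasibility constraints of Lemma~\ref{lem:small-feasibility} are used. For these I would write each inequality as a function of $(n,v)$ that is increasing in $n$. I would then verify the finitely many minimal pairs by hand and argue monotonicity in $n$ for the rest.
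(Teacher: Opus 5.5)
Your proposal is correct and follows the paper's proof essentially step for step. You use the same identity \eqref{eq:3LminusG}, the same split into rows with nonnegative bonus correction versus the rows (c), (f$_1$)--(f$_3$), (g$_2$), (i) with $u=1$, and the same use of Lemma~\ref{lem:small-feasibility} at the small-$n$ boundaries. One small slip is in row (f$_2$): the pair $(n,v)=(6,3)$ falls between your two sub-arguments, since there $8(n-2)/3-19n/18-40/9=-1/9$. The $v$-term $32/27$ covers it, giving $29/27$, exactly as in the paper's evaluation at $n=v+3$ and as your own minimal-pair check would catch.
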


We next consider a reversal that makes \(j\) a Condorcet winner, so that the rule selects \(j\) with probability one. The bonuses ensure that \(i\)'s winning probability in \(T\) is sufficiently large for the gain--loss inequality to hold. The detailed case analysis is deferred to Appendix~\ref{app:proof-condorcet-creation}.

\begin{lemma}[]
\label{lem:condorcet-creation}
Suppose \(T\) has no Condorcet winner and \(j\) becomes a Condorcet winner in \(T'\). Then $r_j(T')-r_j(T)
    \le
    3\bigl(r_i(T)-r_i(T')\bigr).$
\end{lemma}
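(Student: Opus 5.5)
We must handle the reversal in which $j$ becomes a Condorcet winner. Then $r_j(T')=1$ and $r_i(T')=0$, so the claimed inequality reduces to
\[
3r_i(T)+r_j(T)\ge 1 .
\]
The normalization in Rule~\ref{rule:2NM} gives
\[
3r_i(T)+r_j(T)=\frac{3t_i(T)+t_j(T)}{M}+\frac4n\left(1-\frac{W(T)}{M}\right),
\]
where $W(T)=\sum_q t_q(T)$. I take from Lemma~\ref{lem:rule2-well-defined} (or its proof in Appendix~\ref{app:welldefine}) that $W(T)\le M$ whenever $T$ has no Condorcet winner, so the second term is nonnegative. If that lemma's proof does not give $W(T)\le M$ directly, establishing it is the first thing to check. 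Given it, it suffices to prove the purely combinatorial bound
\[
3t_i(T)+t_j(T)\ge M=\tfrac{143}{12}.
\]

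The structure of $T$ is forced. Since $j$ is a Condorcet winner in $T'$ and only the match with $i$ changed, $j$ loses only to $i$ in $T$. Hence $d_j(T)=n-2$, $s_j(T)=1$, $j\in H(T)$, and $i$ is a blocker of $j$. I would split into cases by $|H(T)|$ and read off the bonuses from Rule~\ref{rule:2NM}. Throughout, I drop every nonnegative score contribution except $\frac32 s_j(T)=\frac32$ in $t_j$, the term $s_j(T)=1$ in $t_i$ (since $i\to_T j$), and whatever else is needed.

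\begin{enumerate}
\item \emph{$H(T)=\{j\}$.} Team $i$ is the unique blocker and receives $B$, so $t_i\ge 1+B$ and $t_j\ge\frac32$. This gives $3(1+B)+\frac32=\frac{143}{12}=M$ exactly.
\item \emph{$H(T)=\{i,j\}$.} Here $i\to_T j$, so $i$ is the internal blocker with bonus $P$. Also $s_i=1$, so $t_i\ge\frac32+1+P$, and $3t_i+t_j\ge \frac{31}{2}>M$.
\item \emph{$H(T)=\{j,h\}$ with $h\ne i$.} Since $j$ loses only to $i$, we have $j\to_T h$. So $j$ is the internal blocker with bonus $P$, and $i$ is its external blocker with bonus $Q$. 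Now $t_j$ also collects $s_h=1$, so $t_j\ge\frac32+1+P$ and $t_i\ge 1+Q$. Then $3(1+Q)+\frac52+P=\frac{143}{12}$, again exactly $M$.
\item \emph{$|H(T)|=3$.} The members form a directed cycle $i\to j\to h\to i$, each with bonus $R$ and score $1$. This gives $t_i\ge\frac52+R$ and $t_j\ge\frac52+R$, so $3t_i+t_j\ge 10+4R>M$.
\end{enumerate}

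The main obstacle is that cases 1 and 3 are tight: the parameters $B,Q,P,M$ are evidently calibrated so that these bounds hold with equality. The argument therefore needs precisely the exact normalization identity and the nonnegativity of $1-W(T)/M$ from the well-definedness lemma. If that identity or inequality were unavailable, I would instead have to retain the discarded nonnegative terms, namely the scores of teams that $i$ and $j$ beat. A second thing to verify is that the case analysis of blocker roles is complete, in particular that $i$ cannot also hold some other bonus in case 1. It cannot: with a unique almost-Condorcet winner, only its blocker receives a bonus.
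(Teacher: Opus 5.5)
Your proof is correct and follows the paper's argument essentially step for step. Both use the nonnegativity of $1-W(T)/M$ from the well-definedness proof to reduce the claim to $3t_i(T)+t_j(T)\ge M$, then split into the same four cases on $H(T)$ with the same bounds, equality holding for $H(T)=\{j\}$ and for $H(T)=\{j,h\}$ with $h\ne i$. The only difference is cosmetic: you bound $3r_i(T)+r_j(T)$ additively where the paper bounds the ratio $(1-r_j(T))/r_i(T)$.
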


Monotonicity requires \(r_i(T)\ge r_i(T')\). If a Condorcet winner is either $i$, $j$, or any other team, this trivially holds. When neither tournament has a Condorcet winner, we compare the scaled loss \(L\) with the scaled unbonused loss \(L_0>0\). The inequality \(L\ge L_0\) already gives monotonicity, so only transitions in which the bonuses can decrease \(L\) below \(L_0\) require further analysis. The detailed analysis is deferred to Appendix~\ref{app:proof-bonus-monotonicity}.

\begin{lemma}[]
\label{lem:bonus-monotonicity}
If neither \(T\) nor \(T'\) has a Condorcet winner, then every transition in Table~\ref{tab:bonus-transitions} satisfies
$r_i(T)\ge r_i(T').$
\end{lemma}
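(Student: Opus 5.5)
The plan is to argue directly from the decomposition in Lemma~\ref{lem:bonus-table}. Under the standing assumption that neither tournament has a Condorcet winner, monotonicity $r_i(T)\ge r_i(T')$ is exactly $L\ge 0$, where
\[
L=L_0-n\delta_i+\Delta .
\]
By \eqref{eq:L0}, with $s_i(T)=3^{1-u}$ and $s_j(T)=3^{-v}$,
\[
L_0=\Bigl(n-\tfrac23-\tfrac{2u}{3}\Bigr)3^{1-u}+(n+2+2v)\,3^{-v}.
\]
Since $u\le n-2$, the first coefficient is at least $(n+2)/3>0$. So both summands are positive, and $L_0>0$.

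\textbf{Transitions with no work to do.} Going through Table~\ref{tab:bonus-transitions}, the correction $-n\delta_i+\Delta$ is nonnegative in these rows:
\begin{itemize}
\item rows (a), (b), (d), (g$_2$), (g$_3$): $\delta_i=0$ and $\Delta\ge0$;
\item row (e$_1$): $\delta_i=0$ and $\Delta=10/9>0$;
\item rows (e$_2$), (g$_1$): $\delta_i<0$ and $\Delta\ge0$;
\item row (e$_3$): $\delta_i=0$ and $\Delta=10/9$.
\end{itemize}
In all of these, $L\ge L_0>0$ and the claim is immediate.

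\textbf{Rows where $i$ leaves $H$.} The remaining rows are (c), (f$_1$), (f$_2$), (f$_3$), (i) and (h). In (c), (f$_1$)--(f$_3$) and (i), the cardinality of $H$ drops. By the proof of Lemma~\ref{lem:bonus-table}, only $i$ can leave $H$, so $i\in H(T)$. Because $i\to_T j$, its single loss is to a team outside $\{i,j\}$, so $u=1$. Keeping only the first term of $L_0$ gives $L_0\ge n-\tfrac43$.

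I would then check each row against this bound:
\begin{itemize}
\item Row (c): we need $L_0\ge B=89/36$. Lemma~\ref{lem:small-feasibility}(1) excludes $n=4$, and $n=3$ cannot give $|H(T)|=1$ without a Condorcet winner, so $n\ge5$. Then $n-\tfrac43\ge\tfrac{11}{3}>\tfrac{89}{36}$.
\item Row (f$_1$): we need $L_0\ge\tfrac{11n}{36}+\tfrac{10}{9}$. This follows from $\tfrac{25n}{36}\ge\tfrac{22}{9}+\tfrac{1}{36}$, which holds since $n\ge5$ by Lemma~\ref{lem:small-feasibility}(2).
\item Rows (f$_2$), (f$_3$): we only need $L_0\ge\tfrac{10}{9}$, which is clear.
\item Row (i): we need $L_0\ge\tfrac n4-\tfrac1{12}$, which follows from $n-\tfrac43\ge\tfrac n4$ for $n\ge2$.
\end{itemize}

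\textbf{Row (h), the one delicate case.} Here $\Delta=-1/12$, so we need $L_0\ge1/12$, but $u$ is not controlled and the $s_i$-term may be tiny. I would therefore use the $s_j$-term instead. Since $j$ enters $H$, it has $n-2$ wins in $T'$, hence $n-3$ wins in $T$. Besides its loss to $i$, it therefore has exactly one other loss, so $v=1$. Then the $s_j$-term equals $(n+4)/3>1/12$, and $L\ge0$ follows.

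I expect the main obstacle to be justifying the structural facts: $u=1$ whenever $|H|$ drops, and $v=1$ in (h). Both come from the observation in Lemma~\ref{lem:bonus-table} that only $i$ can leave and only $j$ can enter $H$. The feasibility bounds on $n$ needed for (c) and (f$_1$) are supplied by Lemma~\ref{lem:small-feasibility}.
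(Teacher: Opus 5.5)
Your proposal is correct and follows the paper's proof essentially step for step: you write $L=L_0-n\delta_i+\Delta$ and set aside the rows where the correction is nonnegative. In rows (c), (f$_1$)--(f$_3$) and (i) you use $u=1$, so $L_0\ge n-\tfrac43$, together with $n\ge5$ from Lemma~\ref{lem:small-feasibility}; in row (h) you use $v=1$, so the $s_j$-term equals $(n+4)/3$. The extra $\tfrac1{36}$ in row (f$_1$) is unnecessary, since the exact requirement is $25n/36\ge 22/9$, but the stronger inequality you state still holds for $n\ge5$, so nothing breaks.
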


We now combine the preceding lemmas to prove the main result.

\begin{proof}[Proof of Theorem~\ref{the:2NM}]
The rule is well-defined by Lemma~\ref{lem:rule2-well-defined} and Condorcet consistent by construction. Monotonicity follows from Lemma~\ref{lem:bonus-monotonicity} and the preceding Condorcet observations.

Fix \(\{i,j\}\)-adjacent tournaments \(T,T'\) with \(i\to_T j\) and \(j\to_{T'}i\). To prove \(2\)-NM\(_2\), it suffices to show $r_j(T')-r_j(T)\le3\bigl(r_i(T)-r_i(T')\bigr)$ by equation \ref{eq:gain-loss}.
If neither tournament has a Condorcet winner, Lemma~\ref{lem:unbonused-local} applies when every bonus is unchanged, and Lemma~\ref{lem:no-cw-bonus-cases} otherwise.
Lemma~\ref{lem:condorcet-creation} covers the case where \(j\) becomes a Condorcet winner. If \(i\) is a Condorcet winner in \(T\), probability conservation gives the stronger bound with factor \(1\). A Condorcet winner outside \(\{i,j\}\) remains a Condorcet winner, so \(i\)'s and \(j\)'s winning probabilities remain zero. Thus the rule is \(2\)-NM\(_2\).

For tightness, let \(\widehat{T}_m\) be a tournament on an odd number \(m\) of teams, each with \(\frac{m-1}{2}\) wins. Choose \(i\in\widehat{T}_m\) and add a team \(j\) with \(i\to j\) and \(j\to v\) for every \(v\in\widehat{T}_m\setminus\{i\}\).
The resulting complete tournament \(T_m\) has a unique almost-Condorcet winner \(j\), whose unique blocker is \(i\). The teams in \(\widehat{T}_m\) have scores \(s_i(T_m)=3^{-\frac{m-3}{2}}\) and \(s_v(T_m)=3^{-\frac{m-1}{2}}\) for \(v\in\widehat{T}_m\setminus\{i\}\). These scores sum to \((m+2)3^{-\frac{m-1}{2}}\longrightarrow0\) as \(m\to\infty\). Hence Rule~\ref{rule:2NM} gives \(t_i(T_m)\longrightarrow1+B\) and \(t_j(T_m)\longrightarrow\frac32\), while each team's uniform share of the remaining probability mass tends to zero.

In \(T_m'\), obtained by reversing \(i\to j\), team \(j\) is the Condorcet winner. Hence
\[
    \frac{r_j(T_m')-r_j(T_m)}
         {r_i(T_m)-r_i(T_m')}
    \longrightarrow
    \frac{M-\frac32}{1+B}
    =
    3.
\]
Therefore Rule~\ref{rule:2NM} is not \(2\)-NM\(_\lambda\) for any \(\lambda<2\).
\end{proof}

\section{Conclusion}

In this paper, we first establish a hierarchy showing $k$-NM$_\lambda$ implies $k$-MNM-$(1+\lambda)$ and hence $k$-SNM-$\lambda/(1+\lambda)$ for any coalition size $k$, while both converse implications fail. Then we consider pairwise non-manipulability. We first close the pairwise $2$-MNM-$\delta$ question by proving the Randomized Death Match rule is $2$-MNM-$3/2$, which is the first rule known to match the universal lower bound. Consequently, we design a novel rule satisfying $2$-non-manipulability for $\lambda=2$, substantially improving the best known upper bound of $\lambda=11$. 

Our results uncover several interesting questions. First, the gap between our upper bound $\lambda=2$ and the conjectured lower bound $\lambda=1$ remains open. Second, it is natural to research these non-manipulation notions on larger coalitions. Finally, one may extend the framework beyond round-robin tournaments and complete graphs. 
For example, tournaments where not every pair of teams plays are more common in practice.
Such settings would require suitably adapted definitions of Condorcet consistency.

\section*{AI Disclosure}
We used GPT 5.6 Sol to assist with the parameter calculation, figure visualization, and typo checking. The tool materially affected Section \ref{sec:2NM}. The authors verified the correctness and originality of all content including references.

\newpage

\bibliographystyle{plain}
\bibliography{ref}

\newpage

\appendix
\clearpage
\section{Illustrations of the Bonus Transitions}
\label{app:bonus-transitions}

The figures below illustrate cases (b)--(i) of Table~\ref{tab:bonus-transitions}, with each subcase shown separately. As in that table, $T$ and $T'$ are $\{i,j\}$-adjacent tournaments with $i\to_T j$ and $j\to_{T'}i$, and neither tournament has a Condorcet winner. Each diagram shows the indicated teams and match outcomes in $T$; the thicker arrow marks $i\to_T j$. Gold shading and double outlines identify the almost-Condorcet winners in $T$. The table below each diagram records the win count $d_u(T)$ and the bonus transition $b_u(T)\to b_u(T')$ for each displayed team $u$, with the columns for almost-Condorcet winners shaded as well. A dash denotes an omitted annotation. The labels $h$ and $\ell$ identify auxiliary teams within each figure; match outcomes that are not drawn are left unspecified.

\begingroup
\captionsetup{font=footnotesize,skip=6pt}
\vspace{18pt}

\noindent\begin{minipage}{\linewidth}
  \centering
  \begin{minipage}[t]{.48\linewidth}
  \captionsetup{type=figure}
  \centering
  \resizebox{.82\linewidth}{!}{%
    \begingroup
\definecolor{bonusblue}{HTML}{24576E}
\definecolor{acgold}{HTML}{FFE8A3}
\begin{tikzpicture}[
  team/.style={circle,draw=black,fill=white,line width=.55pt,
    minimum size=5.2mm,inner sep=0pt,font=\small},
  ac team/.style={team,fill=acgold,double,double distance=.7pt,line width=.45pt},
  match/.style={-{Stealth[length=1.9mm,width=1.3mm]},line width=.6pt},
  reversed match/.style={match,line width=1.1pt,draw=bonusblue},
  every node/.style={font=\small}
]
\path[use as bounding box] (0,0) rectangle (7.4,4.45);

\node[anchor=west,inner sep=0pt,font=\bfseries\small] at (.1,4.25) {Case (b)};
\node[anchor=east,inner sep=0pt,font=\footnotesize] at (7.3,4.25) {$|H(T)|=0,\quad |H(T')|=1$};
\node[team] (i) at (1.200,2.720) {$i$};
\node[team] (j) at (3.700,2.720) {$j$};
\node[team] (h) at (6.200,2.720) {$h$};
\draw[reversed match] (i) -- (j);
\draw[match] (h) -- (j);
\node[anchor=north,inner sep=0pt,font=\fontsize{8.5}{10}\selectfont] at (3.7,1.5) {
\setlength{\tabcolsep}{3pt}
\renewcommand{\arraystretch}{1.0}
\begin{tabular}{@{}p{1.70cm}>{\centering\arraybackslash}p{1.51cm}>{\centering\arraybackslash}p{1.51cm}>{\centering\arraybackslash}p{1.51cm}@{}}
\toprule
\textbf{Team} & $i$ & $j$ & $h$ \\
\midrule
\textbf{Wins in $T$} & \textemdash & $n-3$ & \textemdash \\
\textbf{Bonus} & \textemdash & \textemdash & $\color{bonusblue}0\to B$ \\
\bottomrule
\end{tabular}
};
\end{tikzpicture}
\endgroup%
  }
  \caption{Case (b), $0\to1$: $j$ enters $H$.}
  \label{fig:bonus-transition-b}
  \end{minipage}\hfill
  \begin{minipage}[t]{.48\linewidth}
  \captionsetup{type=figure}
  \centering
  \resizebox{.82\linewidth}{!}{%
    \begingroup
\definecolor{bonusblue}{HTML}{24576E}
\definecolor{acgold}{HTML}{FFE8A3}
\begin{tikzpicture}[
  team/.style={circle,draw=black,fill=white,line width=.55pt,
    minimum size=5.2mm,inner sep=0pt,font=\small},
  ac team/.style={team,fill=acgold,double,double distance=.7pt,line width=.45pt},
  match/.style={-{Stealth[length=1.9mm,width=1.3mm]},line width=.6pt},
  reversed match/.style={match,line width=1.1pt,draw=bonusblue},
  every node/.style={font=\small}
]
\path[use as bounding box] (0,0) rectangle (7.4,4.45);

\node[anchor=west,inner sep=0pt,font=\bfseries\small] at (.1,4.25) {Case (c)};
\node[anchor=east,inner sep=0pt,font=\footnotesize] at (7.3,4.25) {$|H(T)|=1,\quad |H(T')|=0$};
\node[ac team] (i) at (3.700,2.720) {$\boldsymbol{i}$};
\node[team] (j) at (6.200,2.720) {$j$};
\node[team] (h) at (1.200,2.720) {$h$};
\draw[match] (h) -- (i);
\draw[reversed match] (i) -- (j);
\node[anchor=north,inner sep=0pt,font=\fontsize{8.5}{10}\selectfont] at (3.7,1.5) {
\setlength{\tabcolsep}{3pt}
\renewcommand{\arraystretch}{1.0}
\begin{tabular}{@{}p{1.70cm}>{\columncolor{acgold}\centering\arraybackslash}p{1.51cm}>{\centering\arraybackslash}p{1.51cm}>{\centering\arraybackslash}p{1.51cm}@{}}
\toprule
\textbf{Team} & $\boldsymbol{i}$ & $j$ & $h$ \\
\midrule
\textbf{Wins in $T$} & $n-2$ & \textemdash & \textemdash \\
\textbf{Bonus} & \textemdash & \textemdash & $\color{bonusblue}B\to0$ \\
\bottomrule
\end{tabular}
};
\end{tikzpicture}
\endgroup%
  }
  \caption{Case (c), $1\to0$: $i$ leaves $H$.}
  \label{fig:bonus-transition-c}
  \end{minipage}
\end{minipage}\par
\vspace{28pt}

\noindent\begin{minipage}{\linewidth}
  \centering
  \begin{minipage}[t]{.48\linewidth}
  \captionsetup{type=figure}
  \centering
  \resizebox{.82\linewidth}{!}{%
    \begingroup
\definecolor{bonusblue}{HTML}{24576E}
\definecolor{acgold}{HTML}{FFE8A3}
\begin{tikzpicture}[
  team/.style={circle,draw=black,fill=white,line width=.55pt,
    minimum size=5.2mm,inner sep=0pt,font=\small},
  ac team/.style={team,fill=acgold,double,double distance=.7pt,line width=.45pt},
  match/.style={-{Stealth[length=1.9mm,width=1.3mm]},line width=.6pt},
  reversed match/.style={match,line width=1.1pt,draw=bonusblue},
  every node/.style={font=\small}
]
\path[use as bounding box] (0,0) rectangle (7.4,4.45);

\node[anchor=west,inner sep=0pt,font=\bfseries\small] at (.1,4.25) {Case (d)};
\node[anchor=east,inner sep=0pt,font=\footnotesize] at (7.3,4.25) {$|H(T)|=1,\quad |H(T')|=1$};
\node[ac team] (i) at (1.500,2.200) {$\boldsymbol{i}$};
\node[team] (j) at (5.900,2.200) {$j$};
\node[team] (h) at (1.500,3.370) {$h$};
\node[team] (ell) at (5.900,3.370) {$\ell$};
\draw[match] (h) -- (i);
\draw[reversed match] (i) -- (j);
\draw[match] (ell) -- (j);
\node[anchor=north,inner sep=0pt,font=\fontsize{8.5}{10}\selectfont] at (3.7,1.5) {
\setlength{\tabcolsep}{3pt}
\renewcommand{\arraystretch}{1.0}
\begin{tabular}{@{}p{1.70cm}>{\columncolor{acgold}\centering\arraybackslash}p{1.08cm}>{\centering\arraybackslash}p{1.08cm}>{\centering\arraybackslash}p{1.08cm}>{\centering\arraybackslash}p{1.08cm}@{}}
\toprule
\textbf{Team} & $\boldsymbol{i}$ & $j$ & $h$ & $\ell$ \\
\midrule
\textbf{Wins in $T$} & $n-2$ & $n-3$ & \textemdash & \textemdash \\
\textbf{Bonus} & \textemdash & \textemdash & $\color{bonusblue}B\to0$ & $\color{bonusblue}0\to B$ \\
\bottomrule
\end{tabular}
};
\end{tikzpicture}
\endgroup%
  }
  \caption{Case (d), $1\to1$: $j$ replaces $i$ in $H$.}
  \label{fig:bonus-transition-d}
  \end{minipage}\hfill
  \begin{minipage}[t]{.48\linewidth}
  \captionsetup{type=figure}
  \centering
  \resizebox{.82\linewidth}{!}{%
    \begingroup
\definecolor{bonusblue}{HTML}{24576E}
\definecolor{acgold}{HTML}{FFE8A3}
\begin{tikzpicture}[
  team/.style={circle,draw=black,fill=white,line width=.55pt,
    minimum size=5.2mm,inner sep=0pt,font=\small},
  ac team/.style={team,fill=acgold,double,double distance=.7pt,line width=.45pt},
  match/.style={-{Stealth[length=1.9mm,width=1.3mm]},line width=.6pt},
  reversed match/.style={match,line width=1.1pt,draw=bonusblue},
  every node/.style={font=\small}
]
\path[use as bounding box] (0,0) rectangle (7.4,4.45);

\node[anchor=west,inner sep=0pt,font=\bfseries\small] at (.1,4.25) {Case (e$_1$)};
\node[anchor=east,inner sep=0pt,font=\footnotesize] at (7.3,4.25) {$|H(T)|=1,\quad |H(T')|=2$};
\node[team] (i) at (1.200,2.525) {$i$};
\node[team] (j) at (3.700,2.525) {$j$};
\node[ac team] (h) at (6.200,2.525) {$\boldsymbol{h}$};
\node[team] (ell) at (3.700,3.565) {$\ell$};
\draw[reversed match] (i) -- (j);
\draw[match] (ell) -- (j);
\draw[match] (j) -- (h);
\node[anchor=north,inner sep=0pt,font=\fontsize{8.5}{10}\selectfont] at (3.7,1.5) {
\setlength{\tabcolsep}{3pt}
\renewcommand{\arraystretch}{1.0}
\begin{tabular}{@{}p{1.70cm}>{\centering\arraybackslash}p{1.08cm}>{\centering\arraybackslash}p{1.08cm}>{\columncolor{acgold}\centering\arraybackslash}p{1.08cm}>{\centering\arraybackslash}p{1.08cm}@{}}
\toprule
\textbf{Team} & $i$ & $j$ & $\boldsymbol{h}$ & $\ell$ \\
\midrule
\textbf{Wins in $T$} & $<n-2$ & $n-3$ & $n-2$ & \textemdash \\
\textbf{Bonus} & \textemdash & $\color{bonusblue}B\to P$ & \textemdash & $\color{bonusblue}0\to Q$ \\
\bottomrule
\end{tabular}
};
\end{tikzpicture}
\endgroup%
  }
  \caption{Case (e$_1$), $1\to2$: the bonus of $j$ changes from $B$ to $P$.}
  \label{fig:bonus-transition-e1}
  \end{minipage}
\end{minipage}\par
\vspace{28pt}

\noindent\begin{minipage}{\linewidth}
  \centering
  \begin{minipage}[t]{.48\linewidth}
  \captionsetup{type=figure}
  \centering
  \resizebox{.82\linewidth}{!}{%
    \begingroup
\definecolor{bonusblue}{HTML}{24576E}
\definecolor{acgold}{HTML}{FFE8A3}
\begin{tikzpicture}[
  team/.style={circle,draw=black,fill=white,line width=.55pt,
    minimum size=5.2mm,inner sep=0pt,font=\small},
  ac team/.style={team,fill=acgold,double,double distance=.7pt,line width=.45pt},
  match/.style={-{Stealth[length=1.9mm,width=1.3mm]},line width=.6pt},
  reversed match/.style={match,line width=1.1pt,draw=bonusblue},
  every node/.style={font=\small}
]
\path[use as bounding box] (0,0) rectangle (7.4,4.45);

\node[anchor=west,inner sep=0pt,font=\bfseries\small] at (.1,4.25) {Case (e$_2$)};
\node[anchor=east,inner sep=0pt,font=\footnotesize] at (7.3,4.25) {$|H(T)|=1,\quad |H(T')|=2$};
\node[team] (i) at (1.500,3.305) {$i$};
\node[team] (j) at (5.900,3.305) {$j$};
\node[ac team] (h) at (3.700,2.070) {$\boldsymbol{h}$};
\draw[reversed match] (i) -- (j);
\draw[match] (i) -- (h);
\draw[match] (h) -- (j);
\node[anchor=north,inner sep=0pt,font=\fontsize{8.5}{10}\selectfont] at (3.7,1.5) {
\setlength{\tabcolsep}{3pt}
\renewcommand{\arraystretch}{1.0}
\begin{tabular}{@{}p{1.70cm}>{\centering\arraybackslash}p{1.51cm}>{\centering\arraybackslash}p{1.51cm}>{\columncolor{acgold}[3pt][0pt]\centering\arraybackslash}p{1.51cm}@{}}
\toprule
\textbf{Team} & $i$ & $j$ & $\boldsymbol{h}$ \\
\midrule
\textbf{Wins in $T$} & $<n-2$ & $n-3$ & $n-2$ \\
\textbf{Bonus} & $\color{bonusblue}B\to Q$ & \textemdash & $\color{bonusblue}0\to P$ \\
\bottomrule
\end{tabular}
};
\end{tikzpicture}
\endgroup%
  }
  \caption{Case (e$_2$), $1\to2$: $i$ is the old blocker, and its bonus changes from $B$ to $Q$.}
  \label{fig:bonus-transition-e2}
  \end{minipage}\hfill
  \begin{minipage}[t]{.48\linewidth}
  \captionsetup{type=figure}
  \centering
  \resizebox{.82\linewidth}{!}{%
    \begingroup
\definecolor{bonusblue}{HTML}{24576E}
\definecolor{acgold}{HTML}{FFE8A3}
\begin{tikzpicture}[
  team/.style={circle,draw=black,fill=white,line width=.55pt,
    minimum size=5.2mm,inner sep=0pt,font=\small},
  ac team/.style={team,fill=acgold,double,double distance=.7pt,line width=.45pt},
  match/.style={-{Stealth[length=1.9mm,width=1.3mm]},line width=.6pt},
  reversed match/.style={match,line width=1.1pt,draw=bonusblue},
  every node/.style={font=\small}
]
\path[use as bounding box] (0,0) rectangle (7.4,4.45);

\node[anchor=west,inner sep=0pt,font=\bfseries\small] at (.1,4.25) {Case (e$_3$)};
\node[anchor=east,inner sep=0pt,font=\footnotesize] at (7.3,4.25) {$|H(T)|=1,\quad |H(T')|=2$};
\node[team] (i) at (1.500,2.070) {$i$};
\node[team] (j) at (5.900,2.070) {$j$};
\node[ac team] (h) at (5.900,3.435) {$\boldsymbol{h}$};
\node[team] (ell) at (1.500,3.435) {$\ell$};
\draw[reversed match] (i) -- (j);
\draw[match] (h) -- (j);
\draw[match] (ell) -- (h);
\node[anchor=north,inner sep=0pt,font=\fontsize{8.5}{10}\selectfont] at (3.7,1.5) {
\setlength{\tabcolsep}{3pt}
\renewcommand{\arraystretch}{1.0}
\begin{tabular}{@{}p{1.70cm}>{\centering\arraybackslash}p{1.08cm}>{\centering\arraybackslash}p{1.08cm}>{\columncolor{acgold}\centering\arraybackslash}p{1.08cm}>{\centering\arraybackslash}p{1.08cm}@{}}
\toprule
\textbf{Team} & $i$ & $j$ & $\boldsymbol{h}$ & $\ell$ \\
\midrule
\textbf{Wins in $T$} & \textemdash & $n-3$ & $n-2$ & \textemdash \\
\textbf{Bonus} & \textemdash & \textemdash & $\color{bonusblue}0\to P$ & $\color{bonusblue}B\to Q$ \\
\bottomrule
\end{tabular}
};
\end{tikzpicture}
\endgroup%
  }
  \caption{Case (e$_3$), $1\to2$: the old blocker $\ell$ lies outside $\{i,j\}$.}
  \label{fig:bonus-transition-e3}
  \end{minipage}
\end{minipage}\par
\clearpage

\noindent\begin{minipage}{\linewidth}
  \centering
  \begin{minipage}[t]{.48\linewidth}
  \captionsetup{type=figure}
  \centering
  \resizebox{.82\linewidth}{!}{%
    \begingroup
\definecolor{bonusblue}{HTML}{24576E}
\definecolor{acgold}{HTML}{FFE8A3}
\begin{tikzpicture}[
  team/.style={circle,draw=black,fill=white,line width=.55pt,
    minimum size=5.2mm,inner sep=0pt,font=\small},
  ac team/.style={team,fill=acgold,double,double distance=.7pt,line width=.45pt},
  match/.style={-{Stealth[length=1.9mm,width=1.3mm]},line width=.6pt},
  reversed match/.style={match,line width=1.1pt,draw=bonusblue},
  every node/.style={font=\small}
]
\path[use as bounding box] (0,0) rectangle (7.4,4.45);

\node[anchor=west,inner sep=0pt,font=\bfseries\small] at (.1,4.25) {Case (f$_1$)};
\node[anchor=east,inner sep=0pt,font=\footnotesize] at (7.3,4.25) {$|H(T)|=2,\quad |H(T')|=1$};
\node[ac team] (i) at (2.700,2.070) {$\boldsymbol{i}$};
\node[team] (j) at (6.200,2.070) {$j$};
\node[ac team] (h) at (4.450,3.435) {$\boldsymbol{h}$};
\node[team] (ell) at (0.900,3.045) {$\ell$};
\draw[match] (ell) -- (i);
\draw[match] (i) -- (h);
\draw[match] (h) -- (j);
\draw[reversed match] (i) -- (j);
\node[anchor=north,inner sep=0pt,font=\fontsize{8.5}{10}\selectfont] at (3.7,1.5) {
\setlength{\tabcolsep}{3pt}
\renewcommand{\arraystretch}{1.0}
\begin{tabular}{@{}p{1.70cm}>{\columncolor{acgold}\centering\arraybackslash}p{1.08cm}>{\centering\arraybackslash}p{1.08cm}>{\columncolor{acgold}\centering\arraybackslash}p{1.08cm}>{\centering\arraybackslash}p{1.08cm}@{}}
\toprule
\textbf{Team} & $\boldsymbol{i}$ & $j$ & $\boldsymbol{h}$ & $\ell$ \\
\midrule
\textbf{Wins in $T$} & $n-2$ & \textemdash & $n-2$ & \textemdash \\
\textbf{Bonus} & $\color{bonusblue}P\to B$ & \textemdash & \textemdash & $\color{bonusblue}Q\to0$ \\
\bottomrule
\end{tabular}
};
\end{tikzpicture}
\endgroup%
  }
  \caption{Case (f$_1$), $2\to1$: the bonus of $i$ changes from $P$ to $B$.}
  \label{fig:bonus-transition-f1}
  \end{minipage}\hfill
  \begin{minipage}[t]{.48\linewidth}
  \captionsetup{type=figure}
  \centering
  \resizebox{.82\linewidth}{!}{%
    \begingroup
\definecolor{bonusblue}{HTML}{24576E}
\definecolor{acgold}{HTML}{FFE8A3}
\begin{tikzpicture}[
  team/.style={circle,draw=black,fill=white,line width=.55pt,
    minimum size=5.2mm,inner sep=0pt,font=\small},
  ac team/.style={team,fill=acgold,double,double distance=.7pt,line width=.45pt},
  match/.style={-{Stealth[length=1.9mm,width=1.3mm]},line width=.6pt},
  reversed match/.style={match,line width=1.1pt,draw=bonusblue},
  every node/.style={font=\small}
]
\path[use as bounding box] (0,0) rectangle (7.4,4.45);

\node[anchor=west,inner sep=0pt,font=\bfseries\small] at (.1,4.25) {Case (f$_2$)};
\node[anchor=east,inner sep=0pt,font=\footnotesize] at (7.3,4.25) {$|H(T)|=2,\quad |H(T')|=1$};
\node[ac team] (i) at (1.500,2.070) {$\boldsymbol{i}$};
\node[team] (j) at (5.900,2.070) {$j$};
\node[ac team] (h) at (3.700,3.435) {$\boldsymbol{h}$};
\draw[match] (h) -- (i);
\draw[reversed match] (i) -- (j);
\draw[match] (j) -- (h);
\node[anchor=north,inner sep=0pt,font=\fontsize{8.5}{10}\selectfont] at (3.7,1.5) {
\setlength{\tabcolsep}{3pt}
\renewcommand{\arraystretch}{1.0}
\begin{tabular}{@{}p{1.70cm}>{\columncolor{acgold}\centering\arraybackslash}p{1.51cm}>{\centering\arraybackslash}p{1.51cm}>{\columncolor{acgold}[3pt][0pt]\centering\arraybackslash}p{1.51cm}@{}}
\toprule
\textbf{Team} & $\boldsymbol{i}$ & $j$ & $\boldsymbol{h}$ \\
\midrule
\textbf{Wins in $T$} & $n-2$ & \textemdash & $n-2$ \\
\textbf{Bonus} & \textemdash & $\color{bonusblue}Q\to B$ & $\color{bonusblue}P\to0$ \\
\bottomrule
\end{tabular}
};
\end{tikzpicture}
\endgroup%
  }
  \caption{Case (f$_2$), $2\to1$: $j$ is the external blocker, and its bonus changes from $Q$ to $B$.}
  \label{fig:bonus-transition-f2}
  \end{minipage}
\end{minipage}\par
\vspace{28pt}

\noindent\begin{minipage}{\linewidth}
  \centering
  \begin{minipage}[t]{.48\linewidth}
  \captionsetup{type=figure}
  \centering
  \resizebox{.82\linewidth}{!}{%
    \begingroup
\definecolor{bonusblue}{HTML}{24576E}
\definecolor{acgold}{HTML}{FFE8A3}
\begin{tikzpicture}[
  team/.style={circle,draw=black,fill=white,line width=.55pt,
    minimum size=5.2mm,inner sep=0pt,font=\small},
  ac team/.style={team,fill=acgold,double,double distance=.7pt,line width=.45pt},
  match/.style={-{Stealth[length=1.9mm,width=1.3mm]},line width=.6pt},
  reversed match/.style={match,line width=1.1pt,draw=bonusblue},
  every node/.style={font=\small}
]
\path[use as bounding box] (0,0) rectangle (7.4,4.45);

\node[anchor=west,inner sep=0pt,font=\bfseries\small] at (.1,4.25) {Case (f$_3$)};
\node[anchor=east,inner sep=0pt,font=\footnotesize] at (7.3,4.25) {$|H(T)|=2,\quad |H(T')|=1$};
\node[ac team] (i) at (1.500,2.070) {$\boldsymbol{i}$};
\node[team] (j) at (5.900,2.070) {$j$};
\node[ac team] (h) at (1.500,3.435) {$\boldsymbol{h}$};
\node[team] (ell) at (5.900,3.435) {$\ell$};
\draw[match] (ell) -- (h);
\draw[match] (h) -- (i);
\draw[reversed match] (i) -- (j);
\node[anchor=north,inner sep=0pt,font=\fontsize{8.5}{10}\selectfont] at (3.7,1.5) {
\setlength{\tabcolsep}{3pt}
\renewcommand{\arraystretch}{1.0}
\begin{tabular}{@{}p{1.70cm}>{\columncolor{acgold}\centering\arraybackslash}p{1.08cm}>{\centering\arraybackslash}p{1.08cm}>{\columncolor{acgold}\centering\arraybackslash}p{1.08cm}>{\centering\arraybackslash}p{1.08cm}@{}}
\toprule
\textbf{Team} & $\boldsymbol{i}$ & $j$ & $\boldsymbol{h}$ & $\ell$ \\
\midrule
\textbf{Wins in $T$} & $n-2$ & \textemdash & $n-2$ & \textemdash \\
\textbf{Bonus} & \textemdash & \textemdash & $\color{bonusblue}P\to0$ & $\color{bonusblue}Q\to B$ \\
\bottomrule
\end{tabular}
};
\end{tikzpicture}
\endgroup%
  }
  \caption{Case (f$_3$), $2\to1$: the external blocker $\ell$ lies outside $\{i,j\}$.}
  \label{fig:bonus-transition-f3}
  \end{minipage}\hfill
  \begin{minipage}[t]{.48\linewidth}
  \captionsetup{type=figure}
  \centering
  \resizebox{.82\linewidth}{!}{%
    \begingroup
\definecolor{bonusblue}{HTML}{24576E}
\definecolor{acgold}{HTML}{FFE8A3}
\begin{tikzpicture}[
  team/.style={circle,draw=black,fill=white,line width=.55pt,
    minimum size=5.2mm,inner sep=0pt,font=\small},
  ac team/.style={team,fill=acgold,double,double distance=.7pt,line width=.45pt},
  match/.style={-{Stealth[length=1.9mm,width=1.3mm]},line width=.6pt},
  reversed match/.style={match,line width=1.1pt,draw=bonusblue},
  every node/.style={font=\small}
]
\path[use as bounding box] (0,0) rectangle (7.4,4.45);

\node[anchor=west,inner sep=0pt,font=\bfseries\small] at (.1,4.25) {Case (g$_1$)};
\node[anchor=east,inner sep=0pt,font=\footnotesize] at (7.3,4.25) {$|H(T)|=2,\quad |H(T')|=2$};
\node[ac team] (i) at (2.700,2.070) {$\boldsymbol{i}$};
\node[team] (j) at (6.200,2.070) {$j$};
\node[ac team] (h) at (4.450,3.435) {$\boldsymbol{h}$};
\node[team] (ell) at (0.900,3.045) {$\ell$};
\draw[match] (ell) -- (i);
\draw[match] (i) -- (h);
\draw[match] (h) -- (j);
\draw[reversed match] (i) -- (j);
\node[anchor=north,inner sep=0pt,font=\fontsize{8.5}{10}\selectfont] at (3.7,1.5) {
\setlength{\tabcolsep}{3pt}
\renewcommand{\arraystretch}{1.0}
\begin{tabular}{@{}p{1.70cm}>{\columncolor{acgold}\centering\arraybackslash}p{1.08cm}>{\centering\arraybackslash}p{1.08cm}>{\columncolor{acgold}\centering\arraybackslash}p{1.08cm}>{\centering\arraybackslash}p{1.08cm}@{}}
\toprule
\textbf{Team} & $\boldsymbol{i}$ & $j$ & $\boldsymbol{h}$ & $\ell$ \\
\midrule
\textbf{Wins in $T$} & $n-2$ & $n-3$ & $n-2$ & \textemdash \\
\textbf{Bonus} & $\color{bonusblue}P\to Q$ & \textemdash & $\color{bonusblue}0\to P$ & $\color{bonusblue}Q\to0$ \\
\bottomrule
\end{tabular}
};
\end{tikzpicture}
\endgroup%
  }
  \caption{Case (g$_1$), $2\to2$: the bonus of $i$ changes from $P$ to $Q$.}
  \label{fig:bonus-transition-g1}
  \end{minipage}
\end{minipage}\par
\vspace{28pt}

\noindent\begin{minipage}{\linewidth}
  \centering
  \begin{minipage}[t]{.48\linewidth}
  \captionsetup{type=figure}
  \centering
  \resizebox{.82\linewidth}{!}{%
    \begingroup
\definecolor{bonusblue}{HTML}{24576E}
\definecolor{acgold}{HTML}{FFE8A3}
\begin{tikzpicture}[
  team/.style={circle,draw=black,fill=white,line width=.55pt,
    minimum size=5.2mm,inner sep=0pt,font=\small},
  ac team/.style={team,fill=acgold,double,double distance=.7pt,line width=.45pt},
  match/.style={-{Stealth[length=1.9mm,width=1.3mm]},line width=.6pt},
  reversed match/.style={match,line width=1.1pt,draw=bonusblue},
  every node/.style={font=\small}
]
\path[use as bounding box] (0,0) rectangle (7.4,4.45);

\node[anchor=west,inner sep=0pt,font=\bfseries\small] at (.1,4.25) {Case (g$_2$)};
\node[anchor=east,inner sep=0pt,font=\footnotesize] at (7.3,4.25) {$|H(T)|=2,\quad |H(T')|=2$};
\node[ac team] (i) at (1.200,2.070) {$\boldsymbol{i}$};
\node[team] (j) at (4.500,2.070) {$j$};
\node[ac team] (h) at (2.850,3.435) {$\boldsymbol{h}$};
\node[team] (ell) at (6.400,3.240) {$\ell$};
\draw[match] (h) -- (i);
\draw[reversed match] (i) -- (j);
\draw[match] (j) -- (h);
\draw[match] (ell) -- (j);
\node[anchor=north,inner sep=0pt,font=\fontsize{8.5}{10}\selectfont] at (3.7,1.5) {
\setlength{\tabcolsep}{3pt}
\renewcommand{\arraystretch}{1.0}
\begin{tabular}{@{}p{1.70cm}>{\columncolor{acgold}\centering\arraybackslash}p{1.08cm}>{\centering\arraybackslash}p{1.08cm}>{\columncolor{acgold}\centering\arraybackslash}p{1.08cm}>{\centering\arraybackslash}p{1.08cm}@{}}
\toprule
\textbf{Team} & $\boldsymbol{i}$ & $j$ & $\boldsymbol{h}$ & $\ell$ \\
\midrule
\textbf{Wins in $T$} & $n-2$ & $n-3$ & $n-2$ & \textemdash \\
\textbf{Bonus} & \textemdash & $\color{bonusblue}Q\to P$ & $\color{bonusblue}P\to0$ & $\color{bonusblue}0\to Q$ \\
\bottomrule
\end{tabular}
};
\end{tikzpicture}
\endgroup%
  }
  \caption{Case (g$_2$), $2\to2$: the bonus of $j$ changes from $Q$ to $P$.}
  \label{fig:bonus-transition-g2}
  \end{minipage}\hfill
  \begin{minipage}[t]{.48\linewidth}
  \captionsetup{type=figure}
  \centering
  \resizebox{.82\linewidth}{!}{%
    \begingroup
\definecolor{bonusblue}{HTML}{24576E}
\definecolor{acgold}{HTML}{FFE8A3}
\begin{tikzpicture}[
  team/.style={circle,draw=black,fill=white,line width=.55pt,
    minimum size=5.2mm,inner sep=0pt,font=\small},
  ac team/.style={team,fill=acgold,double,double distance=.7pt,line width=.45pt},
  match/.style={-{Stealth[length=1.9mm,width=1.3mm]},line width=.6pt},
  reversed match/.style={match,line width=1.1pt,draw=bonusblue},
  every node/.style={font=\small}
]
\path[use as bounding box] (0,0) rectangle (7.4,4.45);

\node[anchor=west,inner sep=0pt,font=\bfseries\small] at (.1,4.25) {Case (g$_3$)};
\node[anchor=east,inner sep=0pt,font=\footnotesize] at (7.3,4.25) {$|H(T)|=2,\quad |H(T')|=2$};
\node[ac team] (i) at (1.500,2.070) {$\boldsymbol{i}$};
\node[team] (j) at (5.900,2.070) {$j$};
\node[ac team] (h) at (1.500,3.435) {$\boldsymbol{h}$};
\node[team] (ell) at (5.900,3.435) {$\ell$};
\draw[match] (ell) -- (h);
\draw[match] (h) -- (i);
\draw[match] (h) -- (j);
\draw[reversed match] (i) -- (j);
\node[anchor=north,inner sep=0pt,font=\fontsize{8.5}{10}\selectfont] at (3.7,1.5) {
\setlength{\tabcolsep}{3pt}
\renewcommand{\arraystretch}{1.0}
\begin{tabular}{@{}p{1.70cm}>{\columncolor{acgold}\centering\arraybackslash}p{1.08cm}>{\centering\arraybackslash}p{1.08cm}>{\columncolor{acgold}\centering\arraybackslash}p{1.08cm}>{\centering\arraybackslash}p{1.08cm}@{}}
\toprule
\textbf{Team} & $\boldsymbol{i}$ & $j$ & $\boldsymbol{h}$ & $\ell$ \\
\midrule
\textbf{Wins in $T$} & $n-2$ & $n-3$ & $n-2$ & \textemdash \\
\textbf{Bonus} & \textemdash & \textemdash & $\color{bonusblue}P\to P$ & $\color{bonusblue}Q\to Q$ \\
\bottomrule
\end{tabular}
};
\end{tikzpicture}
\endgroup%
  }
  \caption{Case (g$_3$), $2\to2$: the coalition members' bonuses are unchanged.}
  \label{fig:bonus-transition-g3}
  \end{minipage}
\end{minipage}\par
\vspace{28pt}

\noindent\begin{minipage}{\linewidth}
  \centering
  \begin{minipage}[t]{.48\linewidth}
  \captionsetup{type=figure}
  \centering
  \resizebox{.82\linewidth}{!}{%
    \begingroup
\definecolor{bonusblue}{HTML}{24576E}
\definecolor{acgold}{HTML}{FFE8A3}
\begin{tikzpicture}[
  team/.style={circle,draw=black,fill=white,line width=.55pt,
    minimum size=5.2mm,inner sep=0pt,font=\small},
  ac team/.style={team,fill=acgold,double,double distance=.7pt,line width=.45pt},
  match/.style={-{Stealth[length=1.9mm,width=1.3mm]},line width=.6pt},
  reversed match/.style={match,line width=1.1pt,draw=bonusblue},
  every node/.style={font=\small}
]
\path[use as bounding box] (0,0) rectangle (7.4,4.45);

\node[anchor=west,inner sep=0pt,font=\bfseries\small] at (.1,4.25) {Case (h)};
\node[anchor=east,inner sep=0pt,font=\footnotesize] at (7.3,4.25) {$|H(T)|=2,\quad |H(T')|=3$};
\node[team] (i) at (0.900,2.720) {$i$};
\node[team] (j) at (3.300,2.720) {$j$};
\node[ac team] (h) at (6.000,3.500) {$\boldsymbol{h}$};
\node[ac team] (ell) at (6.000,1.940) {$\boldsymbol{\ell}$};
\draw[reversed match] (i) -- (j);
\draw[match] (h) -- (j);
\draw[match] (j) -- (ell);
\draw[match] (ell) -- (h);
\node[anchor=north,inner sep=0pt,font=\fontsize{8.5}{10}\selectfont] at (3.7,1.5) {
\setlength{\tabcolsep}{3pt}
\renewcommand{\arraystretch}{1.0}
\begin{tabular}{@{}p{1.70cm}>{\centering\arraybackslash}p{1.08cm}>{\centering\arraybackslash}p{1.08cm}>{\columncolor{acgold}\centering\arraybackslash}p{1.08cm}>{\columncolor{acgold}[3pt][0pt]\centering\arraybackslash}p{1.08cm}@{}}
\toprule
\textbf{Team} & $i$ & $j$ & $\boldsymbol{h}$ & $\boldsymbol{\ell}$ \\
\midrule
\textbf{Wins in $T$} & \textemdash & $n-3$ & $n-2$ & $n-2$ \\
\textbf{Bonus} & \textemdash & $\color{bonusblue}Q\to R$ & $\color{bonusblue}0\to R$ & $\color{bonusblue}P\to R$ \\
\bottomrule
\end{tabular}
};
\end{tikzpicture}
\endgroup%
  }
  \caption{Case (h), $2\to3$: the bonus of $j$ changes from $Q$ to $R$.}
  \label{fig:bonus-transition-h}
  \end{minipage}\hfill
  \begin{minipage}[t]{.48\linewidth}
  \captionsetup{type=figure}
  \centering
  \resizebox{.82\linewidth}{!}{%
    \begingroup
\definecolor{bonusblue}{HTML}{24576E}
\definecolor{acgold}{HTML}{FFE8A3}
\begin{tikzpicture}[
  team/.style={circle,draw=black,fill=white,line width=.55pt,
    minimum size=5.2mm,inner sep=0pt,font=\small},
  ac team/.style={team,fill=acgold,double,double distance=.7pt,line width=.45pt},
  match/.style={-{Stealth[length=1.9mm,width=1.3mm]},line width=.6pt},
  reversed match/.style={match,line width=1.1pt,draw=bonusblue},
  every node/.style={font=\small}
]
\path[use as bounding box] (0,0) rectangle (7.4,4.45);

\node[anchor=west,inner sep=0pt,font=\bfseries\small] at (.1,4.25) {Case (i)};
\node[anchor=east,inner sep=0pt,font=\footnotesize] at (7.3,4.25) {$|H(T)|=3,\quad |H(T')|=2$};
\node[ac team] (i) at (3.400,3.045) {$\boldsymbol{i}$};
\node[team] (j) at (6.300,3.598) {$j$};
\node[ac team] (h) at (1.300,1.973) {$\boldsymbol{h}$};
\node[ac team] (ell) at (5.500,1.973) {$\boldsymbol{\ell}$};
\draw[reversed match] (i) -- (j);
\draw[match] (i) -- (h);
\draw[match] (h) -- (ell);
\draw[match] (ell) -- (i);
\node[anchor=north,inner sep=0pt,font=\fontsize{8.5}{10}\selectfont] at (3.7,1.5) {
\setlength{\tabcolsep}{3pt}
\renewcommand{\arraystretch}{1.0}
\begin{tabular}{@{}p{1.70cm}>{\columncolor{acgold}\centering\arraybackslash}p{1.08cm}>{\centering\arraybackslash}p{1.08cm}>{\columncolor{acgold}\centering\arraybackslash}p{1.08cm}>{\columncolor{acgold}[3pt][0pt]\centering\arraybackslash}p{1.08cm}@{}}
\toprule
\textbf{Team} & $\boldsymbol{i}$ & $j$ & $\boldsymbol{h}$ & $\boldsymbol{\ell}$ \\
\midrule
\textbf{Wins in $T$} & $n-2$ & \textemdash & $n-2$ & $n-2$ \\
\textbf{Bonus} & $\color{bonusblue}R\to Q$ & \textemdash & $\color{bonusblue}R\to P$ & $\color{bonusblue}R\to0$ \\
\bottomrule
\end{tabular}
};
\end{tikzpicture}
\endgroup%
  }
  \caption{Case (i), $3\to2$: the bonus of $i$ changes from $R$ to $Q$.}
  \label{fig:bonus-transition-i}
  \end{minipage}
\end{minipage}\par

\clearpage
\endgroup

\section{Missing Proofs for Section~\ref{sec:2NM}}
\label{app:detailed-case-analysis}

\newenvironment{restatedlemma}[1]
  {\par\medskip\begingroup
   \noindent\textbf{Lemma~\ref{#1}.}\enspace\itshape}
  {\par\endgroup\medskip}

\subsection{Well-Definedness}\label{app:welldefine}

\begin{restatedlemma}{lem:rule2-well-defined}
Rule~\ref{rule:2NM} defines a probability distribution on every tournament.
\end{restatedlemma}

\begin{proof}
The claim is immediate when a Condorcet winner exists. Suppose that \(T\) has no Condorcet winner. We will prove \(\sum_{i\in N}t_i(T)\le M\), which makes the remaining probability mass nonnegative. We first bound the sum of the unbonused win-strength scores.

Let \(\ell_i(T)=n-1-d_i(T)\) denote the number of losses of team \(i\), and set $x_i=\ell_i(T)-1=n-2-d_i(T)$.
Then \(x_i\ge 0\) and \(s_i(T)=3^{-x_i}\). In the sum of the unbonused win-strength scores, \(s_i(T)\) appears with coefficient \(3/2\) in \(i\)'s own score and once in the score of each team that defeats \(i\). Hence
\begin{equation}
\label{eq:total-unbonused-score}
    \sum_{i\in N} t_i^0(T)
    =
    \sum_{i\in N}
    \left(\frac32+\ell_i(T)\right)s_i(T)
    =
    \sum_{i\in N}\frac{x_i+5/2}{3^{x_i}}.
\end{equation}

The sequence $f(x)=\frac{x+5/2}{3^x}$ is discrete convex, since
\[
    f(x)-2f(x+1)+f(x+2)
    =
    \frac{2(2x+3)}{3^{x+2}}
    >0.
\]

By Landau’s characterization of tournament score sequences \cite{landauDominanceRelationsStructure1953}, this discrete convexity implies that, among tournaments without a Condorcet winner, the sum in \eqref{eq:total-unbonused-score} is maximized by the following sequence of win counts \(d_i(T)\):
\[
    n-2,\ n-2,\ n-2,\ n-4,\ n-5,\ldots,1,0.
\]
The corresponding scores \(s_i(T)\) are
\[
    1,\ 1,\ 1,\ 3^{-2},\ 3^{-3},\ldots,3^{-(n-2)}.
\]
Therefore
\begin{align*}
    \sum_{i\in N}t_i^0(T)
    &\le
    3\cdot\frac52
    +
    \sum_{x=2}^{n-2}\frac{x+5/2}{3^x}\\
    &<
    \frac{15}{2}
    +
    \sum_{x=2}^{\infty}\frac{x}{3^x}
    +
    \frac52\sum_{x=2}^{\infty}\frac{1}{3^x}\\
    &=
    \frac{15}{2}+\frac{5}{12}+\frac{5}{12}
    =
    \frac{25}{3}.
\end{align*}

The possible values of the total bonus are
\[
    0,\quad
    B=\frac{89}{36},\quad
    P+Q=\frac{43}{12},\quad
    3R=\frac72.
\]
Their maximum is \(P+Q=43/12\). Adding this to the bound on the unbonused scores gives
\[
    \sum_{i\in N}t_i(T)
    \le
    \frac{25}{3}+\frac{43}{12}
    =
    \frac{143}{12}
    =
    M.
\]
Thus the remaining probability mass in Rule~\ref{rule:2NM} is nonnegative, so every \(r_i(T)\) is nonnegative. Moreover,
\[
    \sum_{i\in N}r_i(T)
    =
    \frac{\sum_i t_i(T)}{M}
    +
    1-\frac{\sum_i t_i(T)}{M}
    =
    1.
\]
\end{proof}

\subsection{Properties for Small Tournaments}
\label{app:property}
\begin{restatedlemma}{lem:small-feasibility}
Assume that neither \(T\) nor \(T'\) has a Condorcet winner.
\begin{enumerate}
\item If \(|H(T)|=1\) and \(|H(T')|=0\), then \(n=4\) is impossible. Moreover, if \(n=5\), then \(v=3\) is impossible.
\item If \(|H(T)|=2\) and \(|H(T')|=1\), then \(n\ge5\).
\item In case \emph{(f\(_2\))}, one has \(n\ge v+3\).
\end{enumerate}
\end{restatedlemma}

\begin{proof}
For item~\emph{(1)}, \(i\) is the unique almost-Condorcet winner in \(T\). First suppose that \(n=4\). Let \(N=\{i,j,h,\ell\}\), and label the two teams outside \(\{i,j\}\) so that $h\to_T i$ and $i\to_T j,\ell$.
Since \(j\) does not enter \(H(T')\), both \(h\) and \(\ell\) defeat \(j\), or $h\to_T j, \ell\to_T j$. If \(h\to_T\ell\), then \(h\) is a Condorcet winner in \(T\). If \(\ell\to_T h\), then both \(h\) and \(\ell\) have \(2=n-2\) wins in \(T\), contradicting the uniqueness of \(i\).

Now suppose that \(n=5\) and \(v=3\). Let \(h\) be the unique team that defeats \(i\). Then \(i\) defeats \(j\) and the two teams in \(N\setminus\{i,j,h\}\), while all three teams outside \(\{i,j\}\) defeat \(j\). Team \(h\) already defeats \(i\) and \(j\); to avoid becoming another almost-Condorcet winner, it must lose to both teams in \(N\setminus\{i,j,h\}\). The winner of their match then defeats \(h\), \(j\), and the loser of that match, giving it \(3=n-2\) wins in \(T\). This again contradicts the uniqueness of \(i\).

\smallskip

For item~\emph{(2)}, suppose that \(n=4\), write \(H(T)=\{i,h\}\), and let \(N=\{i,j,h,\ell\}\). Since \(i\) leaves the set of almost-Condorcet winners and \(j\) does not enter it, both \(h\) and \(\ell\) defeat \(j\). If \(i\to_T h\), then \(h\) loses only to \(i\), so \(h\to_T j,\ell\), while \(\ell\to_T i\); hence \(\ell\) is also an almost-Condorcet winner. If \(h\to_T i\), then \(i\to_T j,\ell\), and \(h\) must lose to \(\ell\); again \(\ell\) is an almost-Condorcet winner. Both cases contradict \(|H(T)|=2\).

For item~\emph{(3)}, in case (f\(_2\)), \(j\) becomes the unique blocker of the almost-Condorcet winner \(h\) that remains after the reversal. Thus \(j\to_T h\), so \(h\) is not among the \(v\) teams outside \(\{i,j\}\) that defeat \(j\). There are \(n-2\) teams outside \(\{i,j\}\) in total, including \(h\). Consequently, \(v\le n-3\), or equivalently \(n\ge v+3\).
\end{proof}

\subsection{No-Condorcet Cases under Bonus Transitions}
\label{app:proof-no-cw-bonus-cases}

\begin{restatedlemma}{lem:no-cw-bonus-cases}
If neither \(T\) nor \(T'\) has a Condorcet winner, then every case in Table~\ref{tab:bonus-transitions} satisfies $G\le3L$.

\end{restatedlemma}

\begin{proof}[Proof of Lemma~\ref{lem:no-cw-bonus-cases}]
In \eqref{eq:3LminusG}, the contribution from bonuses is \(4\Delta-3n\delta_i-n\delta_j\). It is easy to verify that this contribution is nonnegative in cases (b), (e\(_1\))--(e\(_3\)), (g\(_1\)), and (h), and zero in cases (a), (d), and (g\(_3\)). Since \(3L_0-G_0>0\) by \eqref{eq:base-slack}, these cases satisfy \(G\le3L\). It remains to consider cases (c), (f\(_1\))--(f\(_3\)), (g\(_2\)), and (i).

\smallskip
\noindent\textbf{Case (c): \(1\to0\).} Here \(u=1\) and \(v\ge2\), so \eqref{eq:base-slack} and \eqref{eq:3LminusG} give
\[
    3L-G
    =
    8\left(
        \frac{n-2}{3}
        +
        \frac{v+1}{3^v}
    \right)
    -
    \frac{89}{9}.
\]
Lemma~\ref{lem:small-feasibility} excludes the parameter pairs \((n,v)=(4,2)\) and \((5,3)\). Thus
\[
    8\left(
        \frac{n-2}{3}
        +
        \frac{v+1}{3^v}
    \right)
    \ge
    \frac{32}{3},
\]
and hence
\[
    3L-G
    \ge
    \frac{32}{3}-\frac{89}{9}
    =
    \frac79
    >0.
\]

\smallskip
\noindent\textbf{Case (f\(_3\)): \(2\to1\).} The bonuses of \(i\) and \(j\) are unchanged, and Lemma~\ref{lem:small-feasibility} gives \(n\ge5\). Therefore,
\[
    3L-G
    =
    8\left(
        \frac{n-2}{3}
        +
        \frac{v+1}{3^v}
    \right)
    -
    \frac{40}{9}
    >
    8-\frac{40}{9}
    =
    \frac{32}{9}.
\]

\smallskip
\noindent\textbf{Case (f\(_2\)): \(j:Q\to B\).} Equation~\eqref{eq:3LminusG} becomes
\[
    3L-G
    =
    8\left(
        \frac{n-2}{3}
        +
        \frac{v+1}{3^v}
    \right)
    -
    \frac{40}{9}
    -
    \frac{19n}{18}.
\]
For fixed \(v\), the right-hand side increases with \(n\), since \(8/3-19/18=29/18>0\). Lemma~\ref{lem:small-feasibility} gives \(n\ge v+3\), so it suffices to bound this expression at \(n=v+3\). Substituting \(v=2\) and \(v=3\) gives \(3L-G\ge17/18\) and \(3L-G\ge29/27\), respectively. For \(v\ge4\), dropping the positive summand \(8(v+1)/3^v\) yields
\[
    3L-G
    \ge
    \frac{29v-89}{18}
    >0.
\]

\smallskip
\noindent\textbf{Case (f\(_1\)): \(i:P\to B\).} In this case,
\[
    3L-G
    =
    8\left(
        \frac{n-2}{3}
        +
        \frac{v+1}{3^v}
    \right)
    -
    \frac{40}{9}
    -
    \frac{11n}{12}.
\]
Lemma~\ref{lem:small-feasibility} gives \(n\ge5\). When \(n=5\), one has \(v\in\{2,3\}\), and direct substitution yields
\[
    3L-G\ge
    \begin{cases}
        59/36,&v=2,\\[1mm]
        17/108,&v=3.
    \end{cases}
\]
For \(n\ge6\), omitting the positive summand \(8(v+1)/3^v\) gives
\[
    3L-G
    \ge
    \frac{8(n-2)}{3}
    -
    \frac{40}{9}
    -
    \frac{11n}{12}
    =
    \frac{63(n-1)-289}{36}
    >0.
\]

\smallskip
\noindent\textbf{Case (g\(_2\)): \(2\to2\) exchange, \(j:Q\to P\).} Here \(u=v=1\), \(\delta_j=P-Q=3/4\), and \(\Delta=0\). Thus
\[
    3L-G
    =
    \frac{8n}{3}-\frac{3n}{4}
    =
    \frac{23n}{12}
    >0.
\]

\smallskip
\noindent\textbf{Case (i): \(3\to2\).} In this case, \(u=1\), \(\delta_i=Q-R=1/4\), and \(\Delta=1/12\), with \(n\ge4\). Consequently,
\begin{align*}
    3L-G
    &\ge
    \frac{8(n-2)}{3}
    +
    \frac13
    -
    \frac{3n}{4}\\
    &=
    \frac{23(n-1)-37}{12}
    >0.
\end{align*}

Case (j) is impossible by Lemma~\ref{lem:bonus-table}. Together with the cases handled at the start of the proof, the analysis above exhaust the cases in Table~\ref{tab:bonus-transitions}.
\end{proof}

\subsection{Condorcet-Creation Cases}
\label{app:proof-condorcet-creation}

\begin{restatedlemma}{lem:condorcet-creation}
Suppose \(T\) has no Condorcet winner and \(j\) becomes a Condorcet winner in \(T'\). Then $r_j(T')-r_j(T)
    \le
    3\bigl(r_i(T)-r_i(T')\bigr).$
\end{restatedlemma}

\begin{proof}[Proof of Lemma~\ref{lem:condorcet-creation}]
Since \(j\) becomes a Condorcet winner after the reversal, it has \(n-2\) wins in \(T\) and loses only to \(i\). Rule~\ref{rule:2NM} gives \(r_j(T')=1\) and \(r_i(T')=0\), so the desired inequality reduces to \(1-r_j(T)\le3r_i(T)\).

The uniform share of the remaining probability mass in Rule~\ref{rule:2NM} is nonnegative, as shown in Lemma~\ref{lem:rule2-well-defined}. Therefore,
\[
    r_i(T)\ge\frac{t_i(T)}{M},
    \quad
    r_j(T)\ge\frac{t_j(T)}{M}.
\]
Consequently,
\[
    \frac{1-r_j(T)}{r_i(T)}
    \le
    \frac{M-t_j(T)}{t_i(T)}.
\]
It suffices to bound the right-hand side by \(3\). We distinguish four cases according to \(H(T)\).

\smallskip
\noindent\textbf{Case 1: \(H(T)=\{j\}\).} Team \(i\) is the unique blocker of \(j\), so
\[
    t_i(T)\ge1+B=\frac{125}{36},
    \quad
    t_j(T)\ge\frac32.
\]
Thus
\[
    \frac{M-t_j(T)}{t_i(T)}
    \le
    \frac{M-\frac32}{1+B}
    =
    3.
\]

\smallskip
\noindent\textbf{Case 2: \(H(T)=\{j,h\}\) with \(i\ne h\).} Since \(j\) loses only to \(i\), one has \(j\to_T h\). Thus \(j\) is the internal blocker of \(h\), whereas \(i\) is the external blocker of \(j\). These roles give
\[
    t_j(T)\ge\frac32+1+P=\frac{14}{3},
    \quad
    t_i(T)\ge1+Q=\frac{29}{12},
\]
and hence
\[
    \frac{M-t_j(T)}{t_i(T)}
    \le
    \frac{M-\frac{14}{3}}{\frac{29}{12}}
    =
    3.
\]

\smallskip
\noindent\textbf{Case 3: \(H(T)=\{i,j\}\).} Here \(i\) is the internal blocker of \(j\), giving
\[
    t_i(T)\ge\frac32+1+P=\frac{14}{3},
    \quad
    t_j(T)\ge\frac32.
\]
It follows that
\[
    \frac{M-t_j(T)}{t_i(T)}
    \le
    \frac{M-\frac32}{\frac{14}{3}}
    =
    \frac{125}{56}
    <3.
\]

\smallskip
\noindent\textbf{Case 4: \(|H(T)|=3\).} The three almost-Condorcet winners form a directed cycle. In \(t_i(T)\) and \(t_j(T)\), the terms \(\frac32s_i(T)\) and \(\frac32s_j(T)\), respectively, each contribute \(\frac32\). Each of \(i\) and \(j\) also defeats another member of the cycle, adding \(1\) to its unbonused win-strength score, and receives bonus \(R\). Hence
\[
    t_i(T),t_j(T)
    \ge
    \frac32+1+R
    =
    \frac{11}{3},
\]
and
\[
    \frac{M-t_j(T)}{t_i(T)}
    \le
    \frac{M-\frac{11}{3}}{\frac{11}{3}}
    =
    \frac94
    <3.
\]
The four cases give \(\frac{M-t_j(T)}{t_i(T)}\le3\), completing the proof.
\end{proof}

\subsection{Monotonicity under Bonus Transitions}
\label{app:proof-bonus-monotonicity}

\begin{restatedlemma}{lem:bonus-monotonicity}
If neither \(T\) nor \(T'\) has a Condorcet winner, then every transition in Table~\ref{tab:bonus-transitions} satisfies
$r_i(T)\ge r_i(T').$
\end{restatedlemma}

\begin{proof}[Proof of Lemma~\ref{lem:bonus-monotonicity}]
Since \(L=nM\bigl(r_i(T)-r_i(T')\bigr)\), the target is \(L\ge0\). By \eqref{eq:master-bonus},
\[
    L=L_0-n\delta_i+\Delta.
\]
Only cases (c), (f\(_1\))--(f\(_3\)), (h), and (i) can have a negative contribution \(-n\delta_i+\Delta\) from the bonuses. Whenever \(u=1\), \eqref{eq:L0} gives
\[
    L_0
    =
    n-1-\frac13+\frac{n+2+2v}{3^v}
    \ge
    n-1-\frac13.
\]

\smallskip
\noindent\textbf{Case (c).} Lemma~\ref{lem:small-feasibility} gives \(n\ge5\), so
\[
    L=L_0-B
    \ge
    \frac{11}{3}-\frac{89}{36}
    =
    \frac{43}{36}
    >0.
\]

\smallskip
\noindent\textbf{Cases (f\(_2\)) and (f\(_3\)).} The estimate for \(L_0\) also gives
\[
    L=L_0-\frac{10}{9}>0.
\]

\smallskip
\noindent\textbf{Case (f\(_1\)).} Using \(n\ge5\) from Lemma~\ref{lem:small-feasibility}, we obtain
\[
    L
    =
    L_0-\frac{11n}{36}-\frac{10}{9}
    \ge
    \frac{25(n-1)-63}{36}
    >0.
\]

\smallskip
\noindent\textbf{Case (h).} Here \(v=1\), and \eqref{eq:L0} gives
\[
    L_0
    \ge
    \frac{n+4}{3}.
\]
Therefore,
\[
    L=L_0-\frac{1}{12}>0.
\]

\smallskip
\noindent\textbf{Case (i).} Finally,
\begin{align*}
    L
    &=
    L_0-\frac{n}{4}+\frac{1}{12}\\
    &\ge
    n-1-\frac13-\frac{n}{4}+\frac{1}{12}\\
    &=
    \frac{3(n-1)}{4}-\frac12
    >0.
\end{align*}
In every other transition, the bonuses either leave \(L\) equal to \(L_0\) or increase it above \(L_0\). Since \(L_0>0\), these transitions also satisfy \(L\ge0\), completing the proof.
\end{proof}

\end{document}